\newcommand{\appropto}{\mathrel{\vcenter{
  \offinterlineskip\halign{\hfil$##$\cr
    \propto\cr\noalign{\kern2pt}\sim\cr\noalign{\kern-2pt}}}}}
\newtheorem{lemma}{Lemma}
\newtheorem{prop}{Proposition}
\newtheorem{rem}{Remark}
\begin{document}
\title{PMBM-based SLAM Filters in 5G mmWave Vehicular Networks}
\author{
\IEEEauthorblockN{Hyowon Kim, \IEEEmembership{Member, IEEE}, Karl Granstr\"{o}m, \IEEEmembership{Member, IEEE}, Lennart Svensson, \IEEEmembership{Senior Member, IEEE}, \\
Sunwoo Kim, \IEEEmembership{Senior Member, IEEE}, and Henk Wymeersch, \IEEEmembership{Senior Member, IEEE}}
\thanks{This paper is accepted for publication in IEEE Transactions on Vehicular Technology.}
\thanks{Copyright (c) 2015 IEEE. Personal use of this material is permitted. However, permission to use this material for any other purposes must be obtained from the IEEE by sending a request to pubs-permissions@ieee.org.}
\thanks{H. Kim, L. Svensson and H. Wymeersch are with the Department of Electrical Engineering, Chalmers University of Technology, 412 58 Gothenburg, Sweden (email: hyowon@chalmers.se; lennart.svensson@chalmers.se; henkw@chalmers.se).}
\thanks{K. Granstr\"{o}m is with Embark Trucks Inc., San Francisco, CA 94107, USA (email: kagranstrom@gmail.com).}
\thanks{S. Kim is with the Department of Electronic Engineering, Hanyang University, 04763 Seoul, South Korea (email: remero@hanyang.ac.kr).}

}

\maketitle
\begin{abstract}
    Radio-based vehicular simultaneous localization and mapping (SLAM) aims to localize vehicles while mapping the landmarks in the environment.
    We propose a sequence of three Poisson multi-Bernoulli mixture~(PMBM) based SLAM filters, which handle the entire SLAM problem in a theoretically optimal manner.
    The complexity of the three proposed SLAM filters is progressively reduced while sustaining high accuracy by deriving SLAM density approximation with the marginalization of nuisance parameters (either vehicle state or data association).
    Firstly, the PMBM SLAM filter serves as the foundation, for which we provide the first complete description based on a Rao-Blackwellized particle filter. Secondly, the Poisson multi-Bernoulli (PMB) SLAM filter is based on the standard reduction from PMBM to PMB, but involves a novel interpretation based on auxiliary variables and a relation to Bethe free energy. Finally, using the same auxiliary variable argument, we derive a marginalized PMB SLAM filter, which avoids particles and is instead implemented with a low-complexity cubature Kalman filter.
    We evaluate the three proposed SLAM filters in comparison with the probability hypothesis density~(PHD) SLAM filter in 5G mmWave vehicular networks and show the computation-performance trade-off between them.
\end{abstract}

\begin{IEEEkeywords}
5G mmWave vehicular networks, Bethe free energy, Poisson multi-Bernoulli mixture filter,
random finite set, simultaneous localization and mapping.
\end{IEEEkeywords}

\IEEEpeerreviewmaketitle
\vspace{-2mm}
\section{Introduction} \label{sec:Introduction}

    In 5G vehicular networks, mmWave signals with large bandwidths bring high resolution in both time-delay and angle domains~\cite{WymSecDesDarTuf:J18}.
    This makes it possible for a 5G mmWave receiver on a vehicle to perform simultaneous localization and mapping (SLAM), i.e., to both exploit the multipath for improving positioning and for using position information to map the environment, which we define as 5G radio-SLAM~(see Fig.~\ref{Fig:scenario})~\cite{Witrisal_SPM2016,Erik_BPSLAM_TWC2019,Hyowon_TWC2020}.
    %
    {The SLAM problem~\cite{Durrant2006SLAM1,Durrant2006SLAM2} is in general divided into a front-end and a back-end problem.
    The front-end problem is to determine the association between landmarks and measurement detections, known as data association and is highly dependent on sensor type.
    The back-end problem is to find the probabilistic SLAM density given data association determined in the front-end problem.
    In the 5G mmWave radio SLAM scenario, this data association is challenging due to the following aspects:
    i) lack of features that identify which landmark generated the corresponding detection;
    ii) high dimensional vehicle state with uncertainty;
    iii) missed detections at the receiver, due to errors in the detection process or varying sensor field-of-view (FoV); 
    iv) false detections, due to errors in measurement routine or clutter.
    Even with known data associations, the back-end SLAM problem is still challenging due to the coupling between the unknown vehicle state trajectory and the unknown landmark states.}
    To solve the SLAM problem, several approaches have been developed, and
    these are now described in detail.

\begin{figure}
\begin{centering}
	\includegraphics[width=.8\columnwidth]{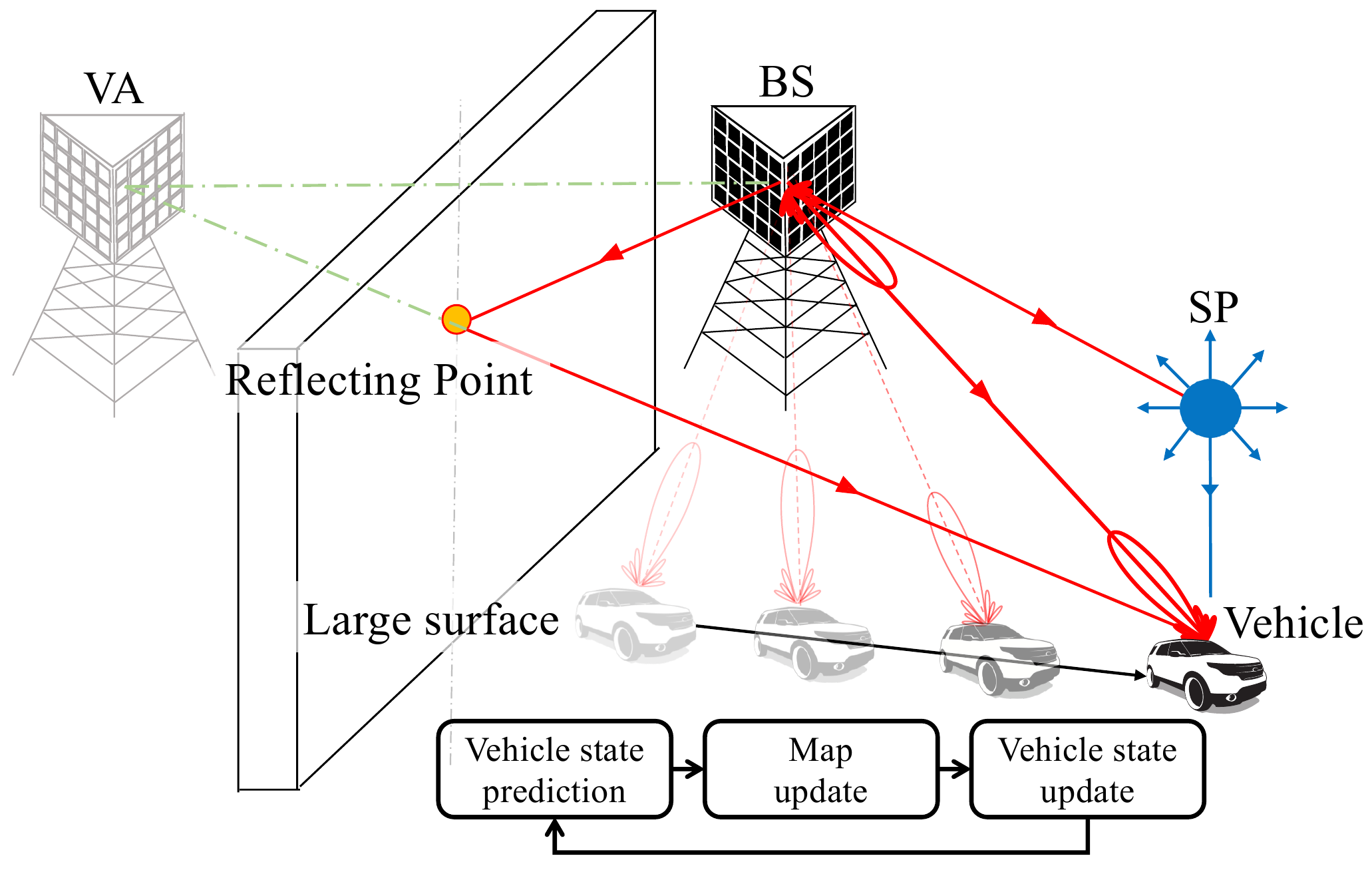}
	\caption{Example of 5G radio-SLAM application. The base station~(BS) transmits the mmWave signals, and multipath components are generated depending on the propagation environment: i) reflected by the large surfaces which are characterizing the virtual anchors (VAs); and ii) scattered by the scattering points (SPs). {A high-level flow-chart of SLAM is shown.}}
	\label{Fig:scenario}
\end{centering}
  \vspace{-6mm}
\end{figure}

    The most well-known SLAM algorithms such as extended Kalman filter~(EKF) SLAM~\cite{Randall_EKFSLAM,Gamini_EKFSLAM}, FastSLAM~\cite{FastSLAM,FastSLAM2,Gentner_SLAM_TWC2016}, and GraphSLAM~\cite{GraphSLAM_2006,GraphSLAM} are Bayesian methods, based on random vectors, to solve on the back-end problem.
    In EKF SLAM, the sensor and landmark states are collected in a single vector, and the posterior is approximated as Gaussian using an EKF. 
    In FastSLAM, the sensor and landmark states are estimated using a Rao-Blackwellized particle filter~(RBPF) where the 
    the sensor state posterior is handled using a particle filter. 
    GraphSLAM~\cite{GraphSLAM_2006,GraphSLAM} makes use of a graphical model to efficiently solve for the maximum a posteriori estimates of the sensor state trajectory and the landmark states by constrained optimization. 
    However, those data associations are explicitly modeled out of Bayesian SLAM filter, and false alarms cannot be properly handled. Instead, the front-end problem (determining the data association) is handled by separate approaches such as the Mahalanobis distance test~\cite{Gamini_EKFSLAM}, maximum likelihood test~\cite{FastSLAM2}, and joint compatibility test~\cite{JCtestDA_TRA2001}.


    
    
    An alternative approach is based on random finite sets~(RFSs)~\cite{mahler_AES_2003_PHD,mahler_book_2007} rather than random vectors, leading to
    a rigorous and powerful framework for solving SLAM problems.  RFS-based SLAM naturally captures the data association problem, the randomness in the number of measurements and number of detected landmarks, and can address both front-end and back-end problems 
    in a fully Bayesian manner. 
    Several RFS-based SLAM filters have been proposed in the literature. 
    {The probability hypothesis density~(PHD) SLAM filter~\cite{Mullane2011} is proposed from~\cite{Vo2006}, implemented by the RBPF.
    With the measurement at the current time, the proposal distribution for sampling the particles is computed~\cite{Lin_PHD-SLAM2.0_TR2021}, which allows for the reduced number of particles.
    Furthermore, the Poisson RFS-likelihood for computing the vehicle posterior density is theoretically derived in~\cite{Hyowon_TWC2020}.
    However, the PHD-SLAM filter is vulnerable to missed detections and false alarms since it cannot track the landmarks without explicit data association.}
    {The labeled multi-Bernoulli~(LMB) SLAM filter~\cite{Deusch2015,DeuschRD:2015} and $\delta$-generalized labeled multi-Bernoulli~($\delta$-GLMB) SLAM filter~\cite{Diluka_GLMB-SLAM_ICCAIS2018,Diluka_GLMB-SLAM_Sensor2019} are proposed.
    LMB is the computationally efficient alternative to $\delta$-GLMB that is identical to LMB mixture under the LMB birth~\cite[Sec.~IV]{Garcia-Fernandez2018}.
    
    Poisson multi-Bernoulli mixture~(PMBM) SLAM~\cite{Yu_Cluster_Sensors,Yu_PMBM_GLO2020,Yu_EK-PMBM_JSAC2021} was introduced without proper mathematical justification for the sensor state density computation.
    The $\delta$-GLMB~\cite{vo2013LMB} and PMBM~\cite{Garcia-Fernandez2018} densities are conjugate priors for each Bayesian recursion and can handle the entire SLAM problem in a theoretically optimal manner.
    Compared to $\delta$-GLMB, PMBM efficiently represents the set of landmarks as two disjoint subsets: undetected landmarks, modeled by Poisson; and detected landmarks, modeled by multi-Bernoulli mixture~(MBM), leading to computational savings in parameterization and hypotheses cardinality~\cite[Sec.~IV]{Garcia-Fernandez2018}.
    Therefore, the PMBM filter is attractive as a starting point for reduced-complexity variations, and is the focus of this work.}
    A reduced-complexity Poisson multi-Bernoulli~(PMB) joint sensor and target tracking filter was proposed in~\cite{Froehle19}, based on the reduction from PMBM to PMB using belief propagation~(BP), but required several ad-hoc steps.  
    BP has also been used to derive low-complexity PMB or MB SLAM  directly (known as BP-SLAM~\cite{Rico_BPSLAM_JSTSP2019,Erik_BPSLAM_TWC2019,Erik_AOABPSLAM_ICC2019}), by modeling the landmarks with random vectors rather than random sets and computing marginal posteriors of landmarks and sensor state. 
    In this paper, we provide a rigorous derivation of the PMBM SLAM filter for a scenario involving mixed continuous and discrete states, to capture different landmark models. From the 
    PMBM-SLAM filter, we develop a reduced complexity PMB-SLAM filter and a low-complexity marginalized PMB-SLAM filter. 
    The main contributions of this paper are as follows:
\begin{itemize}
    \item We provide the first complete derivation of the PMBM-SLAM filter from the PMBM-MTT filter, by adopting the expected RFS-likelihood calculation from~\cite{Garcia-Fernandez2018}.
    It provides a fully Bayesian solution to the complete problem, including the birth of landmarks and a compact representation of all hypotheses, where undetected landmarks are elegantly modeled by means of a Poisson point process~(PPP).
    \item We derive the PMB-SLAM filter from the proposed PMBM-SLAM filter,
    motivated by the introduction of a novel auxiliary variable of data association.
    We also reveal a novel connection with the Bethe free energy~\cite{Yedidia2005BetheFree} to the computation of the normalization constant for determining the sensor particle weights.
    \item We develop a novel marginalized PMB-SLAM filter from the PMB-SLAM filter, by marginalizing out the sensor state and global hypotheses,
    and by deriving the marginalized sensor posterior density.
    By not tracking the correlations between the sensor state and landmarks, a significant reduction in the computational burden is achieved while sustaining the SLAM accuracy compared to both PMBM and PMB SLAM filters. {We also reveal close connections of the BP-SLAM filter to the proposed marginalized PMB-SLAM filter.}
    \item
    The developed marginalized PMB-SLAM filter provides a new framework for developing new versions of EKF-SLAM~\cite{Gamini_EKFSLAM,Randall_EKFSLAM}, FastSLAM~\cite{FastSLAM,FastSLAM2,Gentner_SLAM_TWC2016}, and GraphSLAM~\cite{GraphSLAM_2006,GraphSLAM}, with an inherent ability to acknowledge the association uncertainties, i.e., it simultaneously solves the front-end and back-end problems in a theoretically optimal manner.
    \item We validate the three proposed SLAM filters in 5G mmWave vehicular networks 
    and provide a performance comparison with the PHD-SLAM filter~\cite{Hyowon_TWC2020}.
\end{itemize}
    
    The rest of this paper is organized as follows.
    Section~\ref{sec:Models} presents the system model in vehicular networks with 5G mmWave communication links.
    Section~\ref{sec:PMBM} introduces the backgrounds of RFS-based SLAM.
    In Section~\ref{sec:SLAMFilters}, the PMBM and PMB SLAM filters are derived, and the marginalized PMB-SLAM filter is proposed.
    Section~\ref{sec:Implementation} presents the CKF implementation of marginalized PMB-SLAM.
    The numerical results and discussions are reported in Section~\ref{sec:Results}, and conclusions are drawn in Section~\ref{sec:Conclusions}. 
    
 \subsubsection*{Notation}
     Scalars are indicated by the italic font, e.g., $x$.
     Vectors and matrices are respectively displayed in the bold lowercase and uppercase letters, e.g., $\mathbf{x}$ and $\mathbf{X}$, and their transpose are indicated by superscript $\top$, e.g., $\mathbf{x}^\top$ and $\mathbf{X}^\top$.
     Sets are displayed in the calligraphic font, e.g., $\mathcal{X}$, and the cardinality of set $\mathcal{X}$ is denoted by $\lvert \mathcal{X} \rvert$.
     The probability density function (pdf) and probability mass function (pmf) are respectively denoted $f(\cdot)$ and $\mathsf{p}(\cdot)$. 
     The number of unknown variables of $\mathbf{x}$ is denoted by $\mathsf{d}{(\mathbf{x})}$.
     The disjoint union of sets is denoted by $\uplus$.
     The symbol $\appropto$ stands for approximately proportional to {application-specific notations are described in Table \ref{Tab:Notations}.}

\begin{table}
    \centering
    \label{Tab:Notations}
    \caption{{Common Notations}}
    \vspace{-2mm}
    \resizebox{0.97\columnwidth}{!} {
\begin{tabular}{|p{14mm}|p{33mm}||p{14mm}|p{33mm}|}
    \hline
    \textbf{Notation} & \textbf{Description} & \textbf{Notation} & \textbf{Description}
    \tabularnewline
    \hline 
    $k$ & time index & $n$ & particle index
    \tabularnewline
    \hline 
    $i$ & landmark index & $j$ & measurement index
    \tabularnewline
    \hline 
    $\mathbf{s}_k$ & vehicle state & 
    $f_k$ & vehicle density
    \tabularnewline
    \hline 
    $\mathbf{x}$ & landmark location &
    $m$ & landmark type
    \tabularnewline
    \hline 
    $a_k^{i,n}$ & local hypothesis~(LH) & $\mathbf{a}_k \in \mathcal{A}_{k}^n$ & global hypothesis~(GH) 
    \tabularnewline
    \hline 
    $r^{i,a_k^i,n}_k$ & existence probability & $f^{i,a_k^i,n}_k$ & landmark density
    \tabularnewline
    \hline 
    $\beta_k^{i,{a}^i_k,n}$ & LH weight & 
    $\beta_k^{\mathbf{a}_k,n}$ & GH weight
    \tabularnewline
    \hline 
\end{tabular}}\vspace{-5mm}
\end{table}

    
\section{System Model in 5G Vehicular Networks}\label{sec:Models}
    Without loss of generality, we describe models for the vehicle dynamics and observations, and introduce the propagation environment, consisting of different types of landmarks.
    We consider that a vehicle is moving around a base station~(BS), and a vehicle state at time step $k$ is denoted by $\mathbf{s}_{k}=[\mathbf{v}_k^\top,{\alpha}_{k},\zeta_{k},\rho_{k},b_{k}]^\top$, where 
    $\mathbf{v}_k=[x_{\text{v},k}, y_{\text{v},k}, z_{\text{v},k}]^\top$, ${\alpha}_{k}$, $\zeta_{k}$,  $\rho_{k}$, and $b_k$ are respectively the 3-dimensional (3D) location, heading, translation speed, angular speed, and clock bias.
    With the known prior density $f(\mathbf{s}_0)$ and known transition density $f(\mathbf{s}_k|\mathbf{s}_{k-1})$, we model the vehicle dynamics as
\begin{align} \label{eq:mobility}
    \mathbf{s}_k = \mathsf{v}(\mathbf{s}_{k-1}) + \mathbf{q}_k, 
\end{align}
    where $\mathsf{v}(\cdot)$ is a transition function~\cite[Chapter 5]{thrun2005probabilistic},~\cite{RongLiJ:2003}, and $\mathbf{q}_k\sim \mathcal{N}(\bf{0},\mathbf{Q})$ denotes the process noise with the known covariance matrix 
    $\mathbf{Q}$.
    
    {The BS periodically transmits the mmWave signals.
    We consider the following propagation environments~\cite{Hyowon_TWC2020} in the viewpoint of the vehicle receiver: i) direct path from BS; ii) reflected path from the reflection surfaces, characterized as virtual anchors~(VAs) by mirroring the BS to surface~\cite{Witrisal_SPM2016,RicoTWC2018,HenkGlobecom2018,Rappaport_6G100GHz_Access2019}; and iii) scattered path from small objects, modeled as scattering points~(SPs), exist as shown in Fig.~\ref{Fig:scenario}.}
    We regard the BS, VAs, and SPs as landmarks which are static, and we assume that landmarks never appear or disappear in the propagation environment.
    We denote the landmark type by $m \in \mathcal{M}=\{\text{BS},\text{VA},\text{SP}\}$ and the landmark location by $\mathbf{x}=[x,y,z]^\top$.
    The set of all landmark locations and types is modeled as an RFS $\mathcal{X}$ with the set density $f(\mathcal{X})$.
    

    
    The receiver at the vehicle can detect the signals~\cite{Heath_MIMOChannel_JSTSP2016}.
    Each signal comes from the landmark $(\mathbf{x},m) \in \mathcal{X}$ with an adaptive detection probability~\cite{wymeersch2020adaptive}, denoted by $\mathsf{p}_{\text{D},k}(\mathbf{s}_k,\mathbf{x},m) \in [0,1]$, which depends on the FoV, and both vehicle and landmark states.
    We assume a channel estimation routine is performed at the vehicle receiver, which provides a bunch of measurements.
    Then, using the channel estimation routine the measurement for the landmark $(\mathbf{x},m)\in\mathcal{X}$ is provided as
\begin{align}
    \mathbf{z}_{k}^j = \mathsf{h}(\mathbf{s}_k, \mathbf{x}, m) + \mathbf{r}_{k}^j, \label{eq:measurementModel}
\end{align}
    where $\mathsf{h}(\mathbf{s}_k, \mathbf{x}, m) = [\tau_{k}^j, (\bm{\theta}_{k}^j)^\top, (\bm{\phi}_{k}^j)^\top]^\top$, and $\mathbf{r}_{k}^j\sim \mathcal{N}(\bm{0},{\mathbf{R}_k^j})$ denotes the measurement noise.
    Here, $\tau_{k}^j$, $\bm{\theta}_{k}^j$, and $\bm{\phi}_{k}^j$ respectively denote a time-of-arrival (TOA), direction-of-arrival (DOA) in azimuth and elevation, direction-of-departure (DOD) in azimuth and elevation, which follow the geometric relations, see, e.g.,~\cite[Appendix B]{Hyowon_TWC2020}, and ${\mathbf{R}_k^j}$ denotes a known covariance matrix.
    We  define clutter 
    consisting of either false alarms or transient targets (e.g., people or passing car) or multi-bounce signals.
    We model the set of measurements as an RFS $\mathcal{Z}_k=\{\mathbf{z}_{k}^1,...,\mathbf{z}_{k}^{J_k}\}$, where $J_k$ is the number of observable measurements (including clutter) at the vehicle receiver.

    The goal is to determine the joint posterior of the vehicle state and landmarks given the prior $f(\mathbf{s}_0)$ and the BS location, i.e., $f(\mathbf{s}_{0:k},\mathcal{X}|\mathcal{Z}_{1:k})$, and to develop a low complexity SLAM filter which computes $f(\mathbf{s}_{k},\mathcal{X}|\mathcal{Z}_{1:k})$.

    

\section{{Background for RFS-based SLAM Filter}}\label{sec:PMBM}
    We provide background for RFS-based SLAM filters.
    First, we briefly present the basic computation 
    and densities of RFS.
    Second, we introduce two joint densities for SLAM and the Bayesian recursion for SLAM. 
    Third, we also introduce marginalization of a PMB density. 

\subsection{Basics on RFS Densities}
\label{sec:RFSdensity}
\subsubsection{General RFS}\label{sec:Background}
    Let us denote a finite set by $\mathcal{X} = \{\mathbf{x}^{1},...,\mathbf{x}^{I}\}$, where each vector $\mathbf{x}^i \in \mathbb{R}^{n_\mathbf{x}}$ is random, and the cardinality $I=\lvert \mathcal{X} \rvert$ is also random, where $\lvert \cdot \rvert$ is the set cardinality.
    By FISST~\cite{mahler_book_2007} with the joint pdf $f(\mathbf{x}^{1},...,\mathbf{x}^{I})$, the set density $f(\mathcal{X})$ is given by~\cite{williams2015marginal}
    $f(\{\mathbf{x}^{1},...,\mathbf{x}^{I}\}) = \mathsf{p}(I)\sum_\pi f(\mathbf{x}^{\pi(1)},...,\mathbf{x}^{\pi(I)})$,
    where $\mathsf{p}(I)=\text{Pr}(\lvert \mathcal{X} \rvert=I)$, and $\pi(\cdot)$ denotes a permutation function, which indicates that the joint pdf $f(\cdot)$ is invariant to all permutations. The set integral is defined in~\cite{mahler_AES_2003_PHD}.

\subsubsection{Poisson RFS}\label{sec:BasicsPPP}
    The RFS $\mathcal{X}$ follows a Poisson process: the cardinality $I$ is Poisson distributed with mean $\mu$ (i.e., $\mathsf{p}(I)=\mu^I \exp(-\mu)/I!$); and given the cardinality $I$, each vector $\mathbf{x}^i\in \mathcal{X}$ is independent and identically distributed with the density $f(\mathbf{x})$ (i.e., $f(\mathbf{x}^{\pi(1)},...,\mathbf{x}^{\pi(I)})=\prod_{\mathbf{x}\in \mathcal{X}} f(\mathbf{x})$).
    Then, the intensity function of the Poisson RFS is given by $\lambda(\mathbf{x})=\mu f(\mathbf{x})$, and the density of the Poisson RFS $\mathcal{X}$ is given by~\cite[pp.~373]{williams2015marginal}
\begin{align} \label{eq:PoissonDensity}
    f(\mathcal{X}) = e^{-\int \lambda(\mathbf{x})\mathrm{d}\mathbf{x}}\prod_{\mathbf{x}\in\mathcal{X}}\lambda(\mathbf{x}).
\end{align}

\subsubsection{Bernoulli RFS and its Generalizations}\label{sec:BasicsMBP}
    The density of the RFS $\mathcal{X}$ following a Bernoulli process is given by
\begin{align} \label{eq:Bernoulli}
    f(\mathcal{X}) =
    \begin{cases}
    1-r,& \mathcal{X}=\emptyset,\\
    rf(\mathbf{x}), & \mathcal{X}=\{\mathbf{x}\},\\
    0, & \lvert \mathcal{X} \rvert > 1,
    \end{cases}
\end{align}
    where $r \in [0,1]$ denotes the landmark's existence probability.
    When $\mathcal{X}$ is the union of $I$ independent Bernoulli RFSs $\mathcal{X}^i,~i=1,\ldots,I$ with the pdf $f^i(\mathcal{X}^i)$, defined in~\eqref{eq:Bernoulli}, the RFS $\mathcal{X}$ follows a multi-Bernoulli (MB) process.
    Then, using convolution theorem for independent RFSs~\cite[pp.~372, 386]{mahler_book_2007}, the density of the RFS $\mathcal{X}$
    is represented as
\begin{align} \label{eq:MBdensity}
    f(\mathcal{X}) = \sum_{\uplus_{i=1}^{I}\mathcal{X}^{i}=\mathcal{X}}\prod_{i=1}^{I} f^i(\mathcal{X}^{i}),
\end{align}
    where $\uplus_{i=1}^{I}\mathcal{X}^{i}$ indicates $\mathcal{X}^{1}\uplus \cdots \uplus \mathcal{X}^{I}$. Here, $\uplus$ stands for the disjoint set union.
    Finally, a multi-Bernoulli mixture (MBM) RFS $\mathcal{X}$ is expressed as a linear combination of $\lvert \mathcal{A} \rvert$ MB densities~\cite{williams2015marginal,Garcia-Fernandez2018}, where $\mathcal{A}$ denotes a set of global hypotheses:
\begin{align} \label{eq:MBMdensity}
    f(\mathcal{X}) 
    & = \sum_{\mathbf{a}\in\mathcal{A}}\beta^\mathbf{a}\,\,\,\sum_{\uplus_{i=1}^{I}\mathcal{X}^{i}=\mathcal{X}}\,\,\,\prod_{i=1}^{I} f^{i,a^i}(\mathcal{X}^{i}),\\
    &\propto \sum_{\mathbf{a} \in \mathcal{A}} \,\,\,\sum_{\uplus_{i=1}^{I}\mathcal{X}^{i}=\mathcal{X}}\,\,\,\prod_{i=1}^{I} \beta^{i,a^i}f^{i,a^i}(\mathcal{X}^{i}),
\end{align}
    where $\mathbf{a}=[a^1,...,a^I]$ denotes a single global hypothesis, where $a^i$ is the local hypothesis indicating the single-trajectory  hypothesis for $i$-th Bernoulli, i.e., the measurements associated at different times to this Bernoulli; $\beta^\mathbf{a}\propto \prod_{i=1}^I\beta^{i,a^i}$ is the weight of global hypothesis $\mathbf{a}$, such that $\sum_{\mathbf{a}\in \mathcal{A}}\beta^\mathbf{a}=1$, i.e., $\mathsf{p}(\mathbf{a})=\beta^\mathbf{a}$; and
    $\beta^{i,a^i}$ and $f^{i,a^i}(\mathcal{X}^i)$ are the weight and the Bernoulli density of potentially detected landmark $i$ under hypothesis $a^i$.

\subsubsection{PMBM and PMB}\label{sec:PMB_PMBM}
    If $\mathcal{X}^{\text{U}}$ and $\mathcal{X}^{\text{D}}$ are independent RFSs such that $\mathcal{X} = \mathcal{X}^{\text{U}} \uplus \mathcal{X}^{\text{D}}$, then 
    the set density $f(\mathcal{X})$ is 
\begin{align}\label{eq:PMBMDensity}
    f(\mathcal{X}) = \sum_{\mathcal{X}^\mathsf{U} \uplus \mathcal{X}^\text{D} =\mathcal{X}} f^{\text{U}}(\mathcal{X}^\text{U})f^{\text{D}}(\mathcal{X}^\text{D}).
\end{align}    
    When $\mathcal{X}^\text{U}$ and $\mathcal{X}^\text{D}$ respectively follow the Poisson process in~\eqref{eq:PoissonDensity} and MBM process in~\eqref{eq:MBMdensity}, then $f(\mathcal{X})$ is a PMBM density.
    Substituting~\eqref{eq:PoissonDensity}--\eqref{eq:MBMdensity} into~\eqref{eq:PMBMDensity}~\cite{Garcia-Fernandez2018},
\begin{align}\label{eq:PMBdensity}
    f(\mathcal{X}) 
    \propto \sum_{\uplus_{i=1}^{I}\mathcal{X}^{i}\uplus \mathcal{X}^\text{U} =\mathcal{X}} \,\,\prod_{\mathbf{x}\in \mathcal{X}^\text{U}}\lambda(\mathbf{x})
    \sum_{\mathbf{a}\in\mathcal{A}} \prod_{i=1}^{I}\beta^{i,a^i}f^{i,a^i}(\mathcal{X}^i).
\end{align}
    Note that when $\mathcal{X}^{\text{D}}$ follows an multi-Bernoulli~(MB) process (i.e., $\lvert \mathcal{A}\rvert=1$), $f(\mathcal{X})$ is a PMB density $ f(\mathcal{X}) \propto \sum_{\uplus_{i=1}^{I}\mathcal{X}^{i}\uplus \mathcal{X}^\text{U} =\mathcal{X}} \,\,\prod_{\mathbf{x}\in \mathcal{X}^\text{U}}\lambda(\mathbf{x}) \prod_{i=1}^{I}f^{i}(\mathcal{X}^i).$

\subsection{Bayesian Recursion of RFS-joint SLAM Density}\label{sec:BayesianRecursion}

{We now describe the SLAM recursion. }
\subsubsection{Joint Vehicle Trajectory and Landmark Density}\label{sec:TrajecRecur}
    At time step $k$, a joint posterior density for a vehicle trajectory $\mathbf{s}_{0:k}$ and a set of landmarks $\mathcal{X}$ is denoted by $f(\mathbf{s}_{0:k},\mathcal{X}|\mathcal{Z}_{1:k})$, which can be factorized as 
\begin{align}\label{eq:TTSLAMden}
    f(\mathbf{s}_{0:k},\mathcal{X}|\mathcal{Z}_{1:k}) = f(\mathbf{s}_{0:k}|\mathcal{Z}_{1:k})f(\mathcal{X}|\mathbf{s}_{0:k},\mathcal{Z}_{1:k}),
\end{align}
    where $f(\mathbf{s}_{0:k}|\mathcal{Z}_{1:k})$ and $f(\mathcal{X}|\mathbf{s}_{0:k},\mathcal{Z}_{1:k})$ are respectively posterior densities for the vehicle trajectory and the set of landmarks conditioned on the vehicle trajectory.
    Each density goes through the following prediction and update steps.
    %
    Assume that the motion of $\mathbf{s}_k$ is independent of $\mathcal{X}$, and that the targets are static. The vehicle trajectory is predicted as
\begin{align} \label{eq:BasicVehPred}
    f(\mathbf{s}_{0:k}|\mathcal{Z}_{1:k-1}) =  f(\mathbf{s}_k|\mathbf{s}_{k-1})f(\mathbf{s}_{0:k-1}|\mathcal{Z}_{1:k-1}),
\end{align}
    where $f(\mathbf{s}_k|\mathbf{s}_{k-1})$ is the known transition density of the dynamic model~\eqref{eq:mobility}.
    We remind the reader that the static landmarks are assumed to never appear or disappear. Thus, the landmarks have no prediction~\cite{Maryam_PMB_TSP2017} and $f(\mathcal{X}|\mathbf{s}_{0:k},\mathcal{Z}_{1:k-1}) = f(\mathcal{X}|\mathbf{s}_{0:k-1},\mathcal{Z}_{1:k-1})$.
    %
    The set of landmarks conditioned on the vehicle trajectory is updated as
\begin{align} \label{eq:BasicObjUp}
    f(\mathcal{X}|\mathbf{s}_{0:k},\mathcal{Z}_{1:k}) = \dfrac{f(\mathcal{X}|\mathbf{s}_{0:k},\mathcal{Z}_{1:k-1})g(\mathcal{Z}_k|\mathbf{s}_{0:k},\mathcal{X},\mathcal{Z}_{1:k-1})}
    {g(\mathcal{Z}_k|\mathbf{s}_{0:k},\mathcal{Z}_{1:k-1})},
\end{align}
    where $g(\mathcal{Z}_k|\mathbf{s}_{0:k},\mathcal{X},\mathcal{Z}_{1:k-1})$ is the RFS-likelihood of the measurement set $\mathcal{Z}_k$ for $\mathbf{s}_{0:k}$ and $\mathcal{X}$, and $g(\mathcal{Z}_k|\mathbf{s}_{0:k},\mathcal{Z}_{1:k-1})$ is the normalizing factor.
    The vehicle trajectory is updated as
\begin{align} \label{eq:BasicVehUp}
    f(\mathbf{s}_{0:k}|\mathcal{Z}_{1:k}) = \dfrac{f(\mathbf{s}_{0:k}|\mathcal{Z}_{1:k-1})g(\mathcal{Z}_k|\mathbf{s}_{0:k},\mathcal{Z}_{1:k-1})}
    {g(\mathcal{Z}_k|\mathcal{Z}_{1:k-1})},
\end{align}
    where 
    $g(\mathcal{Z}_k|\mathcal{Z}_{1:k-1})$ is the normalizing factor.

\subsubsection{Marginal Vehicle State and Landmark Density}\label{sec:StateRecur}
    In case we are not interested in vehicle trajectories, we can reduce complexity by recursively determining marginal densities $f(\mathbf{s}_k|\mathcal{Z}_{1:k})$ and $f(\mathcal{X}|\mathcal{Z}_{1:k})$.
    Assume $f(\mathbf{s}_{k-1}|\mathcal{Z}_{1:k-1})$ and $f(\mathcal{X}|\mathcal{Z}_{1:k-1})$ are given, then $f(\mathbf{s}_k|\mathcal{Z}_{1:k-1}) = \int f(\mathbf{s}_k|\mathbf{s}_{k-1})f(\mathbf{s}_{k-1}|\mathcal{Z}_{1:k-1})\mathrm{d}\mathbf{s}_{k-1}$ by the Chapman-Kolmogorov equation.
    We recall that the landmarks have no prediction. The update step then becomes~\cite{Froehle19}
\begin{align}
    &f(\mathbf{s}_k|\mathcal{Z}_{1:k}) \propto f(\mathbf{s}_k|\mathcal{Z}_{1:k-1})\int  f(\mathcal{X}|\mathcal{Z}_{1:k-1})  g(\mathcal{Z}_k|\mathbf{s}_k,\mathcal{X}) \delta \mathcal{X}, \label{eq:UpMarVeh}
\end{align}
and
\begin{align}
    &f(\mathcal{X}|\mathcal{Z}_{1:k}) \propto f(\mathcal{X}|\mathcal{Z}_{1:k-1}) \int f(\mathbf{s}_k|\mathcal{Z}_{1:k-1}) g(\mathcal{Z}_k|\mathbf{s}_k,\mathcal{X}) \mathrm{d}\mathbf{s}_{k}\label{eq:UpMarObj}.
\end{align}

\subsection{{Marginal PMB Density}}\label{sec:marginalPMBMethod}
    
    In our proposed filters, we will work with PMB densities that are conditioned on nuisance variables: the global hypotheses (see \eqref{eq:PMBdensity}) or the vehicle state (see \eqref{eq:BasicObjUp}). When we marginalize out these nuisance variables and approximate again with a PMB density (leading to a marginal PMB density), less complex filters result, since the marginal PMB density of the set of landmarks can be represented as the Poisson and MB components.
    In the following, the marginal PMB density is approximated, derived by the Kullback-Leibler divergence~(KLD) method.
    
    Given a density $f(\mathcal{X}|\bm{\eta})$, conditioned on a nuisance variable $\bm{\eta}$ that can contain continuous and discrete variables with the probability distribution $p(\bm{\eta})$, of the form of a PMB, $f(\mathcal{X}|\bm{\eta})= \sum_{\uplus_{i=1}^{I}\mathcal{X}^{i}\uplus \mathcal{X}^\text{U} =\mathcal{X}} \,\,f^{\text{U}}(\mathcal{X}^\text{U}|\bm{\eta}) \prod_{i=1}^{I}f^{i}(\mathcal{X}^i|\bm{\eta})$. In case we want to approximate $f(\mathcal{X})=\mathbb{E}_{\bm{\eta}}[f(\mathcal{X}|\bm{\eta})]$
    with a PMB, we use the following approach~\cite{Garcia_Trajectory_2020}: we extend the state space with an auxiliary variable by $u \in \mathcal{U}=\{0,1,...,I\}$, where $u=0$ implies that the landmark has not yet been detected, while $u=i>0$ indicates that the landmark corresponds to the $i$-th Bernoulli component. A set of landmark states with auxiliary variables is denoted by $\tilde{\mathcal{X}}$ with elements $(u,\mathbf{x}) \in \mathcal{U}\times \mathbb{R}^{n_\mathbf{x}}$.
    Then, we define
\begin{align}
    \tilde{f}(\tilde{\mathcal{X}})&=\mathbb{E}_{\bm{\eta}}[\tilde{f}^{\text{U}}(\tilde{\mathcal{X}}^\text{U}|\bm{\eta})\prod_{i=1}^{I}\tilde{f}^{i}(\tilde{\mathcal{X}}^i|\bm{\eta})],
    \label{eq:PMBwithNuisance}
\end{align}
where
$\tilde{\mathcal{X}}^\text{U}=\{(u,\mathbf{x}) \in \tilde{\mathcal{X}}: u=0 \}$ and $\tilde{\mathcal{X}}^i=\{(u,\mathbf{x}) \in \tilde{\mathcal{X}}: u=i \}$, and similarly to~\cite[eq.~(8)]{Garcia_Trajectory_2020} we express
\begin{align}
    \tilde{f}^{\text{U}}(\tilde{\mathcal{X}}|\bm{\eta})&
    = \exp(-\int \lambda(\mathbf{x}|\bm{\eta})\textrm{d}\mathbf{x}) \prod_{(u,\mathbf{x})\in \tilde{\mathcal{X}}}\delta_{0,u}\lambda(\mathbf{x}|\bm{\eta}),\\
    \tilde{f}^{i}(\tilde{\mathcal{X}}|\bm{\eta})&=
    \begin{cases}
        1-r^i(\bm{\eta}), & \tilde{\mathcal{X}}=\emptyset,\\
        r^i(\bm{\eta}) f^{i}(\mathbf{x}|\bm{\eta})\delta_{u,i}, & \tilde{\mathcal{X}}=(u,\mathbf{x}),\\
        0, & \text{otherwise},
    \end{cases}
\end{align}
    where $\delta_{u,i}$ denotes a Kronecker delta, defined as $\delta_{u,i}=1$ if $u=i$ and $\delta_{u,i}=0$, otherwise, and the existence probability is depending on the nuisance variable, denoted by $r^i(\bm{\eta})$.
    It is readily verified that marginalizing out the auxiliary variables in $\tilde{f}(\tilde{\mathcal{X}})$ yields $\mathbb{E}_{\bm{\eta}} [f(\mathcal{X}|\bm{\eta})]$.
    The goal is to obtain a PMB approximation $ \tilde{q}(\tilde{\mathcal{X}})=\tilde{q}^{\text{U}}(\tilde{\mathcal{X}}^\text{U})\prod_{i=1}^{I}\tilde{q}^{i}(\tilde{\mathcal{X}}^i)$, 
    and we express
\begin{align}
    \tilde{q}^{\text{U}}(\tilde{\mathcal{X}})&
    = \exp\big(-\int \lambda^q(\mathbf{x})\textrm{d}\mathbf{x}\big) \prod_{(u,\mathbf{x})\in \tilde{\mathcal{X}}}\delta_{0,u}\lambda^q(\mathbf{x})
    \end{align}
    \begin{align}
    \tilde{q}^{i}(\tilde{\mathcal{X}})&=
    \begin{cases}
        1-r^{i}, & \tilde{\mathcal{X}}=\emptyset,\\
        r^{i} f^{i}(\mathbf{x})\delta_{u,i}, & \tilde{\mathcal{X}}=(u,\mathbf{x}),\\
        0, & \text{otherwise}.
    \end{cases}
\end{align}

\begin{table*}
\centering
\newcolumntype{P}[1]{>{\centering\arraybackslash}m{#1}}
\caption{Overview of the three proposed SLAM filters ($N$ particle samples; $\mathcal{A}_k^n$ updated global hypotheses; $\lvert \mathcal{Z}_k \rvert = J_k$ newly detected landmarks or clutter; $I_{k-1}$ previously detected landmarks; $B_{\max}$ maximum allowable global hypotheses; $L_{\max}$ iterations, required for BP convergence).} \label{tab:Overview}
\resizebox{0.9\textwidth}{!} {
\begin{tabular}{|l|c|c|c|}
    \hline 
    &
    \begin{tabular}{@{}c@{}}
    \textbf{PMBM-SLAM} 
    \end{tabular}
    &
    \begin{tabular}{@{}c@{}}
    \textbf{PMB-SLAM} 
    \end{tabular}
    &
    \begin{tabular}{@{}c@{}}
    \textbf{Marg.~PMB-SLAM} 
    \end{tabular}
    \tabularnewline
    \hline 
    \hline 
    {{Bayesian recursion}} & {$f(\mathbf{s}_{0:k},\mathcal{X}|\mathcal{Z}_{1:k})$ of~\eqref{eq:TTSLAMden}} & {$f(\mathbf{s}_{0:k},\mathcal{X}|\mathcal{Z}_{1:k})$ of~\eqref{eq:TTSLAMden}} & {$f(\mathbf{s}_{k}|\mathcal{Z}_{1:k})$ of~\eqref{eq:UpMarVeh}, $f(\mathcal{X}|\mathcal{Z}_{1:k})$ of~\eqref{eq:UpMarObj}}
    \tabularnewline
    \hline 
    Sensor Repr.
    &
    \begin{tabular}{@{}c@{}}
    $f(\mathbf{s}_{0:k}|\mathcal{Z}_{1:k})\approx \sum_{n=1}^N \delta(\mathbf{s}_{0:k}-\mathbf{s}_{0:k}^n) w_k^n$ \\ (by particle sample)
    \end{tabular}
    &
    \begin{tabular}{@{}c@{}}
    $f(\mathbf{s}_{0:k}|\mathcal{Z}_{1:k})\approx \sum_{n=1}^N \delta(\mathbf{s}_{0:k}-\mathbf{s}_{0:k}^n) w_k^n$
    \\
    (by particle sample)
    \end{tabular}
    &
    \begin{tabular}{@{}c@{}}
    $f(\mathbf{s}_{k}|\mathcal{Z}_{1:k})\approx \mathcal{N}(\mathbf{s}_k;\mathbf{s}_{\mathsf{u},k},\mathbf{U}_{\mathsf{u},k})$
    \\
    (by CKF)
    \end{tabular}
    \tabularnewline
    \hline 
    Map Repr.
    
    & $f(\mathcal{X}|\mathbf{s}_{0:k}^n,\mathcal{Z}_{1:k})$ by PMBM, $\forall n$  & $f(\mathcal{X}|\mathbf{s}_{0:k}^n,\mathcal{Z}_{1:k})$ by PMB, $\forall n$ & $f(\mathcal{X}|\mathcal{Z}_{1:k})$ by PMB
    \tabularnewline
    \hline
    Data Assoc. &
    \begin{tabular}{@{}c@{}}
    update $\mathcal{A}_k^n$ by $B_\text{max}$-best global hypotheses
    \\
    using Murty's alg.~\cite{Garcia-Fernandez2018,murthy1968algorithm}
    \end{tabular}
    &
    \begin{tabular}{@{}c@{}}
    compute marginal association \\ probabilities~(Sec.~\ref{sec:AppPMBMtoaPMB})
    \\
    by BP during $L_\text{max}$ iterations~\cite{williams2015marginal}
    \end{tabular}
    &
    \begin{tabular}{@{}c@{}}
    compute marginal association
    \\ 
    probabilities~(Sec.~\ref{sec:AppPMBMtoaPMB})
    \\
    by BP during $L_\text{max}$ iterations~\cite{williams2015marginal}
    \end{tabular}
    \tabularnewline
    \hline 
    {\# Global Hypo.}
    &
    $\lvert \mathcal{A}_k^n \rvert \geq 1,~\forall~n$
    & 
    $\lvert \mathcal{A}_k^n \rvert =1,~\forall~n$
    & 
    $\lvert \mathcal{A}_k \rvert =1$
    \tabularnewline
    \hline 
    Vehicle weight
    &
    \begin{tabular}{@{}c@{}}
    compute~\eqref{eq:ELCompute} by the updated MBM \\ components~(Sec.~\ref{sec:PMBMpred})
    \end{tabular}
    & compute~\eqref{eq:ELCompute} by Bethe free energy~\eqref{eq:BTFapp} & --\tabularnewline
    \hline 
     \begin{tabular}{@{}c@{}}
    Complexity per
    \\
    time step $k$
    \end{tabular}
    &
      $\mathcal{O}\left(\sum_n \vert\mathcal{A}^n_{k-1}\vert \left(I_{k-1} + J_k\right)^3B_{\max} \right)$
      &
      $\mathcal{O}\left(N I_{k-1} J_k L_{\max}\right)$
      &
      $\mathcal{O}\left(I_{k-1} J_k L_{\max}\right)$
      \tabularnewline
    \hline 
\end{tabular}}
\end{table*}

\begin{lemma} \label{lem:MinKLD}
    \normalfont
    Given the density $\tilde{f}(\tilde{\mathcal{X}})$ of the form~\eqref{eq:PMBwithNuisance}, the PMB approximation $\tilde{q}(\tilde{\mathcal{X}})$
    that minimizes the KLD $D(\tilde{f} \Vert \tilde{q})$ is of the form $\tilde{q}^{\text{U}}(\tilde{\mathcal{X}}^{\text{U}})  = \mathbb{E}_{\bm{\eta}} [\tilde{f}^{\text{U}}(\tilde{\mathcal{X}}^\text{U}|\bm{\eta})]$ and $ \tilde{q}^{i}(\tilde{\mathcal{X}}^i)  = \mathbb{E}_{\bm{\eta}} [\tilde{f}^{i}(\tilde{\mathcal{X}}^i|\bm{\eta})]$.
\end{lemma}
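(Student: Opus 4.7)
The plan is to exploit the auxiliary variable $u$ to turn the set integral into a product over disjoint ``slots''. The crucial observation is that, because $\tilde{f}^i(\tilde{\mathcal{X}}^i|\bm{\eta})$ is supported on sets whose elements all carry the label $u=i$ (and $\tilde{f}^{\text{U}}(\tilde{\mathcal{X}}^{\text{U}}|\bm{\eta})$ is supported on sets with $u=0$), the decomposition $\tilde{\mathcal{X}} = \tilde{\mathcal{X}}^{\text{U}} \uplus \tilde{\mathcal{X}}^1 \uplus \cdots \uplus \tilde{\mathcal{X}}^I$ obtained by grouping elements by their auxiliary label is \emph{unique} for each $\tilde{\mathcal{X}}$. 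Hence the partition sum in \eqref{eq:PMBwithNuisance} collapses to a single term, and both $\tilde{f}$ and the PMB approximant $\tilde{q}$ truly factorise as products of per-slot densities, evaluated on these label-induced slots.

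First I would write $D(\tilde{f}\Vert \tilde{q})=\mathbb{E}_{\tilde{f}}[\log \tilde{f}(\tilde{\mathcal{X}})] - \mathbb{E}_{\tilde{f}}[\log \tilde{q}(\tilde{\mathcal{X}})]$, where the first term does not depend on $\tilde{q}$. Using $\log \tilde{q}(\tilde{\mathcal{X}}) = \log \tilde{q}^{\text{U}}(\tilde{\mathcal{X}}^{\text{U}})+\sum_{i=1}^I \log \tilde{q}^i(\tilde{\mathcal{X}}^i)$, the cross-entropy term splits into $I+1$ summands. For the $i$-th summand I would apply the tower property: conditional on $\bm{\eta}$, the remaining factors $\tilde{f}^j(\cdot|\bm{\eta})$ for $j\neq i$ integrate to $1$ as Bernoulli set densities and $\tilde{f}^{\text{U}}(\cdot|\bm{\eta})$ integrates to $1$ as a Poisson set density, so
\begin{align}
\mathbb{E}_{\tilde{f}}\!\left[\log \tilde{q}^i(\tilde{\mathcal{X}}^i)\right]
=\int \mathbb{E}_{\bm{\eta}}\!\left[\tilde{f}^i(\tilde{\mathcal{X}}^i|\bm{\eta})\right] \log \tilde{q}^i(\tilde{\mathcal{X}}^i)\, \delta \tilde{\mathcal{X}}^i,
\end{align}
and analogously for the Poisson slot.

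Next I would add and subtract the entropy of each marginal to obtain
\begin{align}
D(\tilde{f}\Vert \tilde{q}) = C + D\!\left(\mathbb{E}_{\bm{\eta}}[\tilde{f}^{\text{U}}]\,\Vert\,\tilde{q}^{\text{U}}\right) + \sum_{i=1}^I D\!\left(\mathbb{E}_{\bm{\eta}}[\tilde{f}^i]\,\Vert\,\tilde{q}^i\right),
\end{align}
where $C$ collects all terms independent of $\tilde{q}$. Since each KLD on the right-hand side is non-negative with equality iff the two arguments coincide, the global minimum is attained (uniquely, up to measure-zero) at $\tilde{q}^{\text{U}}(\tilde{\mathcal{X}}^{\text{U}})=\mathbb{E}_{\bm{\eta}}[\tilde{f}^{\text{U}}(\tilde{\mathcal{X}}^{\text{U}}|\bm{\eta})]$ and $\tilde{q}^i(\tilde{\mathcal{X}}^i)=\mathbb{E}_{\bm{\eta}}[\tilde{f}^i(\tilde{\mathcal{X}}^i|\bm{\eta})]$, as claimed. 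Equivalently, Gibbs' inequality is applied slot-by-slot.

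The main obstacle I expect is justifying the set-integral manipulations rigorously: one must verify that $\int \tilde{f}^j(\tilde{\mathcal{X}}^j|\bm{\eta})\,\delta \tilde{\mathcal{X}}^j=1$ in spite of the Kronecker delta $\delta_{u,j}$ built into the density (so the $u$-variable is summed, not integrated), and that a Fubini-type exchange between the expectation over $\bm{\eta}$ and the FISST set integral is legitimate under mild regularity of $\lambda(\mathbf{x}|\bm{\eta})$, $r^i(\bm{\eta})$ and $f^i(\mathbf{x}|\bm{\eta})$. The conceptual heart of the proof is the observation that the label $u$ makes the partition of $\tilde{\mathcal{X}}$ deterministic, which is what kills the combinatorial sum $\sum_{\uplus_{i}\tilde{\mathcal{X}}^i=\tilde{\mathcal{X}}}$ in \eqref{eq:PMBwithNuisance}; once this is in place the remainder reduces to the textbook ``KLD from a product-form approximant is minimised by the marginals'' calculation.
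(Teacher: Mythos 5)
Your proposal is correct and follows essentially the same route as the paper, which simply delegates the proof to \cite[Appendix~B]{Garcia_Trajectory_2020}: the auxiliary labels make the slot partition of $\tilde{\mathcal{X}}$ unique, the KLD then decomposes into per-slot cross-entropy terms, and Gibbs' inequality gives the product of the $\bm{\eta}$-marginalized factors as the minimizer. Your write-up is a faithful, slightly more detailed version of that standard argument (including the correct observation that the Kronecker deltas collapse the convolution sum), so no gap to report.
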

\begin{proof}
    The proof is similar to~\cite[Appendix B]{Garcia_Trajectory_2020}.
\end{proof}
    Given a PMBM posterior (where we obtain a mixture due to the nuisance parameter $\bm{\eta}$), we can thus introduce auxiliary variables, use Lemma~\ref{lem:MinKLD} to approximate that PMBM distribution as a PMB distribution with auxiliary variable, and finally marginalize out the nuisance parameters to obtain a PMB without auxiliary variables.
    For example, when $\bm{\eta}$ is continuous variable then the parameters of the final PMB distribution can be expressed in terms of the original PMBM as follows.
    Using the sense of KLD minimization in Lemma~\ref{lem:MinKLD}, the PMB components are computed as follows: the Poisson intensity is given by $\lambda^q(\mathbf{x})=\int p(\bm{\eta})\lambda(\mathbf{x}|\bm{\eta}) \mathrm{d}\bm{\eta}$, while 
    the Bernoulli components have existence probability  $r^{i} = \int p(\bm{\eta}) r^i(\bm{\eta}) \mathrm{d}\bm{\eta}$ and  density $   f^i(\mathbf{x}) = {\int p(\bm{\eta}) r^i(\bm{\eta}) f^i(\mathbf{x}|\bm{\eta}) \mathrm{d}\bm{\eta}}/({\int p(\bm{\eta}) r^i(\bm{\eta}) \mathrm{d}\bm{\eta}})$. 
    In the approximation of the PMBM as a PMB, we merge all possible distributions for the PPP into a new PPP, and we also merge all possible distributions for Bernoulli component $i$ (landmark $i$) into Bernoulli component $i$~\cite[Proposition~2]{Garcia_Trajectory_2020}.

\section{Proposed RFS-based Filters}\label{sec:SLAMFilters}
    In this section, we provide an intuition of the proposed SLAM filters. This is followed by the proposed PMBM, PMB, and marginalized PMB SLAM filters. {The main results relate to the vehicle state update, and are expressed as Propositions 1--3.}

\subsection{Short Description of the Proposed SLAM Filters}
\label{sec:PMBintution}
    {The main development objective is to provide a low-complexity implementation of a PMBM-based SLAM filter.}
    In Table~\ref{tab:Overview}, we provide a high-level comparison of the three proposed filters.
    The starting point is the PMBM-SLAM filter (see Section \ref{sec:PMBM-SLAM-Filter}), implemented by an RBPF~\cite{Mullane2011}. The entire filter is described for completeness, though our novelty pertains to the computation of the particle weights. 
    From the PMBM-SLAM filter, two PMB SLAM filters are developed by marginalizing out the nuisance parameter (e.g., global hypothesis $\mathbf{a}_k$ in PMB-SLAM; and both global hypothesis $\mathbf{a}_k$ and vehicle state trajectory $\mathbf{s}_{0:k}$ in marginalized PMB-SLAM).
    
    In the PMB-SLAM filter (see Section~\ref{sec:PMB-SLAM-Filter}), the PMBM is reduced to a PMB~(i.e, result of marginalizing out global hypothesis) after each update step, following the approach from Section~\ref{sec:marginalPMBMethod}, which is achieved by running loopy belief propagation (LBP)~\cite{williams2015marginal} to compute marginal association probabilities.
    Particle weights can no longer be based on expected likelihoods, and a novel approach based on Bethe free energy is proposed. 
    Finally, in the marginalized PMB-SLAM filter (Section~\ref{sec:MarginalizedPMB}), we apply the method from Section~\ref{sec:marginalPMBMethod} and also marginalize out the vehicle state trajectories, allowing for a low-complexity implementation without particles.


    All filters operate according to the prediction and update steps, similar to Section~\ref{sec:TrajecRecur}. In particular, 
    in the update step, we will refer to four steps: \emph{Step i)} missed detections of landmarks that were previously undetected; \emph{Step ii)} newly detected landmarks or clutter that were previously undetected, now representing as MB density; \emph{Step iii)} missed detections of the previously detected landmarks; and \emph{Step iv)} detections from the previously detected landmarks.
    
\subsection{PMBM-SLAM Filter}\label{sec:PMBM-SLAM-Filter}
     We develop the proposed PMBM-SLAM filter from the PMBM-MTT \cite{Garcia-Fernandez2018} {in line with the Bayesian recursion in Section~\ref{sec:TrajecRecur}.}
\subsubsection{Notation}\label{sec:PMBM-SLAM-Filter_Not}
    For notational simplicity, we will denote $f(\mathbf{s}_{0:k}|\mathcal{Z}_{1:k-1}) \triangleq f_{\mathsf{p},k}(\mathbf{s}_{0:k})$, $f(\mathbf{s}_{0:k}|\mathcal{Z}_{1:k})  \triangleq f_{\mathsf{u},k}(\mathbf{s}_{0:k})$, $f(\mathcal{X}|\mathbf{s}_{0:k},\mathcal{Z}_{1:k})\triangleq f_{\mathsf{u},k}(\mathcal{X})$.
    We adopt the RBPF approach for the SLAM filter: using particles $f_{\mathsf{p},k}(\mathbf{s}_{0:k})$ and $f_{\mathsf{u},k}(\mathbf{s}_{0:k})$ are represented as $f_{\mathsf{u},k}(\mathbf{s}_{0:k})\approx \sum_{n=1}^N w_{\mathsf{u},k}^n \delta(\mathbf{s}_{0:k}-\mathbf{s}_{0:k}^n),$ where $w_{\mathsf{u},k}^n \geq 0$ such that $\sum_{n=1}^N w_{\mathsf{u},k}^n=1$, and $f_{\mathsf{u},k}(\mathcal{X})$ is maintained by the set density of the map conditioned on vehicle particles $\mathbf{s}_{0:k}^n,~\forall n$.
    We note that $f(\mathcal{X}|\mathbf{s}_{0:k}^n,\mathcal{Z}_{1:k})$ is a PMBM density consisting of Poisson and MBM densities.
    In undetected landmarks with the Poisson density, we will denote the intensity function conditioned on vehicle sample $n$ by $\lambda(\mathbf{x},m|\mathbf{s}_{0:k}^n,\mathcal{Z}_{1:k})\triangleq \lambda_{\mathsf{u},k}^n(\mathbf{x},m)$, and {we consider the intensity function $\lambda_{\mathsf{u},k}^n(\mathbf{x},m)$ for $m\in\{\text{VA},\text{SP}\}$
    since the BS is regarded as a detected landmark and is thus not considered in the undetected landmarks.}
    In detected landmarks with MBM density, we will denote the MBM density conditioned on vehicle sample $n$ by $f(\mathbf{x},m|\mathbf{s}_{0:k}^n,\mathcal{Z}_{1:k}) \triangleq f_{\mathsf{u},k}^n(\mathbf{x},m)$, and we consider the MBM components $f_{\mathsf{u},k}^n(\mathbf{x},m)$ 
    for $m\in \mathcal{M}$,
    $\beta_{\mathsf{u},k}^n$, and existence probability $r_{\mathsf{u},k}^n$.

    For each particle $n$ in the posterior at time $k-1$, we thus have 
    $\{\mathbf{s}_{0:k-1}^n,w_{\mathsf{u},k-1}^n\}$, PPP
    $\{\lambda_{\mathsf{u},k-1}^n(\mathbf{x},m)\}_{m\in\{\text{VA},\text{SP}\}}$, global hypotheses $\mathbf{a}_{k-1}\in\mathcal{A}_{k-1}^n$, and MBM components $\{\{f_{\mathsf{u},k-1}^{i,a_{k-1}^i,n}(\mathbf{x},m)\}_{m\in\{\text{BS},\text{VA},\text{SP}\}},r_{\mathsf{u},k-1}^{i,a_{k-1}^i,n},\beta_{\mathsf{u},k-1}^{i,a_{k-1}^i,n}\}_{i=1}^{I_{k-1}}$,  
    which will be predicted and updated for all $n$ in the following.
    
\subsubsection{Vehicle Prediction and Map Update}\label{sec:PMBMpred}
    We use~\eqref{eq:mobility} to generate $\mathbf{s}_k^n \sim f(\mathbf{s}_k|\mathbf{s}_{k-1}^n)$, and $w_{\mathsf{p},k}^n=w_{\mathsf{u},k-1}^n$. 
    The map update step consists of standard PMB components update and global hypothesis update from~\cite{Garcia-Fernandez2018}. The details are provided in Appendix~\ref{sec:App-PMBM-update}.

\subsubsection{Vehicle State Update}\label{sec:PMBMupdateVeh}
    {The updated particle weight $w_{\mathsf{u},k}^n$ is related to the prior particle weight $w_{\mathsf{p},k}^n$ through the following proposition. 
\begin{prop}[PMBM-SLAM particle weight update]
    The updated particle weight is given by 
\begin{align}
   w_{\mathsf{u},k}^n \propto &\,w_{\mathsf{p},k}^n \sum_{\mathbf{a}_{k-1}\in \mathcal{A}_{k-1}}\sum_{ \substack{ \uplus_{i=1}^{I_{k-1}} \mathcal{Z}_k^i \uplus \mathcal{Z}_k^{\text{U}}= \mathcal{Z}_k,\\ |\mathcal{Z}_k^i|\le 1}}  \prod_{\mathbf{z}_k^j \in \mathcal{Z}_k^\text{U}}\nu_k^{n}({\{\mathbf{z}_k^j\}})\nonumber\\
    &\times \prod_{i=1}^{I_{k-1}} \nu_k^{i,a_{k-1}^i,n}(\mathcal{Z}_k^i)\beta_{\mathsf{u},k-1}^{i,a_{k-1}^i,n}, 
    \label{eq:ELCompute}
\end{align}
    where $\nu_k^{n}({\{\mathbf{z}_k^j\}})$, $\nu_k^{i,a_{k-1}^i,n}(\mathcal{Z}_k^i)$, and $\beta_{\mathsf{u},k-1}^{i,a_{k-1}^i,n}$ are computed as part of the PMB component update in Appendix~\ref{sec:App-PMBM-map-update}. 
\end{prop}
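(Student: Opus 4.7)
The plan is to start from the Bayesian trajectory update \eqref{eq:BasicVehUp}, which under the RBPF representation $f_{\mathsf{u},k}(\mathbf{s}_{0:k})\approx \sum_n w_{\mathsf{u},k}^n \delta(\mathbf{s}_{0:k}-\mathbf{s}_{0:k}^n)$ immediately yields $w_{\mathsf{u},k}^n \propto w_{\mathsf{p},k}^n\, g(\mathcal{Z}_k\mid \mathbf{s}_{0:k}^n,\mathcal{Z}_{1:k-1})$. The remaining task is to evaluate the predictive measurement likelihood
\begin{align*}
g(\mathcal{Z}_k\mid \mathbf{s}_{0:k}^n,\mathcal{Z}_{1:k-1}) = \int f(\mathcal{X}\mid \mathbf{s}_{0:k-1}^n,\mathcal{Z}_{1:k-1})\, g(\mathcal{Z}_k\mid \mathbf{s}_k^n,\mathcal{X})\,\delta\mathcal{X},
\end{align*}
where the prior is the per-particle PMBM density inherited from $k-1$ and $g(\mathcal{Z}_k\mid \mathbf{s}_k^n,\mathcal{X})$ is the standard RFS likelihood combining the per-landmark Gaussian model \eqref{eq:measurementModel}, the adaptive detection probability $\mathsf{p}_{\text{D},k}$, and the Poisson clutter component.

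Next, I would split the landmark set as $\mathcal{X}=\mathcal{X}^{\text{U}}\uplus \mathcal{X}^{\text{D}}$ and expand the prior via \eqref{eq:PMBMDensity}--\eqref{eq:PMBdensity}. The RFS likelihood factors over any disjoint partition of $\mathcal{Z}_k$ that assigns each measurement to either a landmark in $\mathcal{X}^{\text{D}}$, a new landmark drawn from $\mathcal{X}^{\text{U}}$, or clutter; on the PMBM side, the independence between the PPP and each Bernoulli turns the set integral into a product of independent integrals. Exchanging the integral and the sums over global hypotheses and measurement partitions, this reduces $g(\mathcal{Z}_k\mid\mathbf{s}_{0:k}^n,\mathcal{Z}_{1:k-1})$ to (i) a sum over $\mathbf{a}_{k-1}\in\mathcal{A}_{k-1}^n$, (ii) a sum over $\uplus_{i=1}^{I_{k-1}}\mathcal{Z}_k^i \uplus \mathcal{Z}_k^{\text{U}}=\mathcal{Z}_k$ with $|\mathcal{Z}_k^i|\leq 1$ (each Bernoulli consumes at most one measurement), and (iii) a product of per-component set integrals. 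Each Bernoulli integral produces exactly the local-hypothesis weight $\nu_k^{i,a_{k-1}^i,n}(\mathcal{Z}_k^i)$ obtained in the map update (Appendix~\ref{sec:App-PMBM-map-update}), covering both missed detections and single detections, while each measurement $\mathbf{z}_k^j\in\mathcal{Z}_k^{\text{U}}$ contributes $\nu_k^{n}(\{\mathbf{z}_k^j\})$, formed from the integrated birth-intensity likelihood plus the clutter intensity. The global-hypothesis weight then appears by factoring $\beta_{\mathsf{u},k-1}^{\mathbf{a}_{k-1},n}\propto \prod_i \beta_{\mathsf{u},k-1}^{i,a_{k-1}^i,n}$ out of the product, and the PPP normalization $\exp(-\int \mathsf{p}_{\text{D},k}(\mathbf{s}_k^n,\mathbf{x},m)\lambda_{\mathsf{u},k-1}^n(\mathbf{x},m)\,\mathrm{d}\mathbf{x})$ together with the clutter normalization are partition-independent and absorbed into the proportionality, giving \eqref{eq:ELCompute}.

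The main obstacle is the bookkeeping across the nested sums: one must verify that each per-partition contribution from the PMBM prior combined with the RFS likelihood collapses into \emph{exactly} the $\nu$ terms emitted by the standard PMBM map update, rather than producing additional residual factors. This is settled by matching terms with the expected-likelihood expansion of \cite[Sec.~V]{Garcia-Fernandez2018}, together with the identification in Appendix~\ref{sec:App-PMBM-map-update} of $\nu_k^{i,a_{k-1}^i,n}(\mathcal{Z}_k^i)$ and $\nu_k^{n}(\{\mathbf{z}_k^j\})$ as the local-hypothesis weights arising from the four update steps (missed/detected on previously undetected or previously detected landmarks). Equivalently, the particle weight is the sum over all unnormalized global-hypothesis weights produced by the map update prior to normalization, which is the desired expression.
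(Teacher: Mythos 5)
Your proposal is correct and follows essentially the same route as the paper's proof: both start from the Bayes update \eqref{eq:BasicVehUp} to get $w_{\mathsf{u},k}^n \propto w_{\mathsf{p},k}^n\, g(\mathcal{Z}_k|\mathbf{s}_{0:k}^n,\mathcal{Z}_{1:k-1})$, write the predictive likelihood as the set integral of the per-particle PMBM prior \eqref{eq:PMBdensity} against the standard RFS likelihood, and expand it into the sum over global hypotheses and measurement partitions whose factors are exactly the $\nu$ terms of Appendix~\ref{sec:App-PMBM-map-update}. The only difference is that you carry out explicitly the expansion that the paper delegates to \cite[Sec.~III-D]{Garcia-Fernandez2018}, which is a fair and equivalent substitute.
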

\begin{proof}
    See Appendix \ref{app:PMBM-weigth-update}.
\end{proof}}

 \begin{rem}[Relation between particle weight and global association weights]
    The global association weights $\beta_k^{\mathbf{a}_k,n}$ for any $\mathbf{a}_k \in \mathcal{A}_{k}^n$ are given by 
\begin{align}
     & \beta_k^{\mathbf{a}_k,n} 
    = \frac{1}{\chi^n_k} \prod_{i=1}^{I_{k}} \beta_{\mathsf{u},k}^{i,a_{k}^i,n} =
    \frac{1}{\chi^n_k} 
    \prod_{i=1}^{I_{k-1}} \beta_{\mathsf{u},k}^{i,a_k^i,n}
    \prod_{i'=I_{k-1}+1}^{I_{k-1}+J_k}\beta_{\mathsf{u},k}^{i',a_k^{i'},n} \notag\\
    & = \frac{1}{\chi^n_k} \prod_{i=1}^{I_{k-1}} \nu_k^{i,a_{k-1}^i,n} (\mathcal{Z}_k^{a_k^i}) \beta_{\mathsf{u},k-1}^{i,a_{k-1}^i,n} \prod_{j \in \text{U}_k(\mathbf{a}_k)} 
    \nu_k^{n}({\{\mathbf{z}_k^j\}}) \label{eq:betaGlobal}
\end{align}
    where (a) $\beta_{\mathsf{u},k}^{i',a_k^{i'},n}=1$ when measurement $j=i'-I_{k-1}$ does not correspond to a newly detected landmark under global hypothesis $\mathbf{a}_k$;
    (b) $\mathcal{Z}_k^{a_k^i}$ is the measurement set (a singleton or empty set) associated with landmark $i$ under global hypothesis $\mathbf{a}_k$, defined as $\mathcal{Z}_k^{a_k^i} = \mathbf{z}_k^{j}$ if $a_k^i=j$, $\mathcal{Z}_k^{a_k^i}=\emptyset$ if $a_k^i=0$;
    and (c) $\text{U}_k(\mathbf{a}_k)$ represents the measurements indices that correspond to newly detected landmarks under global hypothesis $\mathbf{a}_k$. The normalization constant $\chi^n_k$ can be recovered from $\sum_{\mathbf{a}_k}\beta_k^{\mathbf{a}_k,n}=1$. 
    Inspecting~\eqref{eq:ELCompute} and~\eqref{eq:betaGlobal}, we observe that the correction to the particle weight is given by the normalization constant $\chi^n_k$ from \eqref{eq:betaGlobal}.
\end{rem}

\subsection{PMB-SLAM Filter}\label{sec:PMB-SLAM-Filter}
    Here, we develop the PMB-SLAM filter by approximating the PMBM as a PMB, {in line with the Bayesian recursion in Section~\ref{sec:TrajecRecur}.}
    First, similarly to the four steps in Section~\ref{sec:PMBMpred}, we compute Poisson and MB components from the previous PMB density for each particle $n$ and global hypothesis $\mathbf{a}_{k-1}^n\in \mathcal{A}_{k-1}^n$, which will be briefly described in Section~\ref{sec:PMBSLAMAPP}.
    Second, we approximate the PMBM as a PMB of the set of landmarks by using Lemma~\ref{lem:MinKLD}, where the nuisance parameter $\bm{\eta}$ corresponds to the global association hypotheses $\mathbf{a}_k$.
    We note that the density of undetected landmarks does not depend on the nuisance parameter, while each Bernoulli component $i$ only depends on $a_k^i$, i.e., we can express
\begin{align}
     \sum_{\mathbf{a}_k \in \mathcal{A}_k}\tilde{f}^\text{U}(\tilde{\mathcal{X}}^\text{U}|\mathbf{a}_k)\mathsf{p}(\mathbf{a}_k)&=\tilde{f}^\text{U}(\tilde{\mathcal{X}}^\text{U})
     \\
     \sum_{\mathbf{a}_k \in \mathcal{A}_k} \tilde{f}^i(\tilde{\mathcal{X}}^i|\mathbf{a}_k)\mathsf{p}(\mathbf{a}_k)&=\sum_{a_k^i=0}^{J_k}  f^{i,a_k^i}(\tilde{\mathcal{X}}^i)\mathsf{p}(a_k^i),
\end{align}
    where $\mathsf{p}(a_k^i)$ is the marginal association probabilities of Bernoulli $i$.
    We show that it can be implemented by using the marginal association probabilities in TOMB/P~\cite{williams2015marginal}, detailed in~\ref{sec:AppPMBMtoaPMB}.
    Third, we compute the marginal association probabilities and derive that this approximation can be designed by the Bethe approach of free energy~\cite{Yedidia2005BetheFree}. We also derive that the weight of vehicle particle is computed by the marginal belief in the Bethe approach, determined by the marginal association probabilities. 

\subsubsection{Vehicle Prediction and Map Update}\label{sec:PMBSLAMAPP}
    By construction, in a PMB, $\lvert \mathcal{A}_{k-1}^n\rvert=1$.
    Notation that was introduced in Section~\ref{sec:PMBM-SLAM-Filter_Not} is used again in here.
    Hence, for each particle $n$ in the posterior at time $k-1$ we have $\{\mathbf{s}_{0:k-1}^n,w_{\mathsf{u},k-1}^n\}$,
    $\{\lambda_{\mathsf{u},k-1}^n(\mathbf{x},m)\}_{m\in\{\text{VA},\text{SP}\}}$, and $\{\{f_{\mathsf{u},k-1}^{i,n}(\mathbf{x},m)\}_{m\in\{\text{BS},\text{VA},\text{SP}\}},r_{\mathsf{u},k-1}^{i,n}\}_{i=1}^{I_{k-1}}$.
    Vehicle prediction follows Section~\ref{sec:PMBMpred}, and landmark parameters are updated, similar to  Section~\ref{sec:PMBMpred}, ignoring the previous association $a_{k-1}^i$.
    Then, 
    we have $J_k + I_{k-1}(J_k + 1)$ Bernoullis (i.e., $J_k$, $I_{k-1}$, and $I_{k-1}J_k$ Bernoullis are respectively obtained in Steps ii)-iv) of Section~\ref{sec:PMBMpred}), as detailed in Appendix \ref{sec:App-PMBM-map-update}. 
    Note that even the landmark density is a PMB at time $k-1$, at the end of time $k$ the landmark density will be a PMBM. 
    In the next two subsections, we describe how a PMBM is approximated by a PMB and how the particle weights for updating the vehicle density are computed. 
    
    
\subsubsection{Approximating the PMBM as a PMB and Vehicle Update}\label{sec:AppPMBMtoaPMB}
    When the landmark density is a PMB, the global association weights in~\eqref{eq:betaGlobal} become
\begin{align}
    \beta_k^{\mathbf{a}_k,n}  = \frac{1}{\chi^n_k} \prod_{i=1}^{I_{k-1}} \nu_k^{i,n}(\mathcal{Z}_k^{a_k^i}) \prod_{j \in \text{U}_k(\mathbf{a}_k)}
    \nu_k^{n}({\{\mathbf{z}_k^j\}}).
    \label{eq:PMBGlobalWeight}
\end{align}
    From this pmf over global hypotheses, we 
    compute the marginals 
    using the LBP algorithm as in \cite[Appendix C, Fig. 9]{williams2015marginal}. In particular, we introduce an integer random variable 
    $\mathbf{a}_k= [\mathbf{c}_k^\top,\mathbf{d}_k^\top]^\top = [c_k^1,...,c_k^{I_{k-1}},d_k^1,...,d_k^{J_k}]^\top$, where $c_k^i \in \{0,1,\ldots,J_k\}$ and $d_k^j \in \{0,1,\ldots,I_{k-1}\}$. The pmf of $\mathbf{a}_k$ is defined as $\mathsf{p}_k^n(\mathbf{a}_k) =\mathsf{p}_k^n(\mathbf{c}_k,\mathbf{d}_k)$ with
\begin{align}
    \mathsf{p}_k^n(\mathbf{a}_k)& 
    = \frac{1}{Z_k^n} \prod_{i=1}^{I_{k-1}}\prod_{j=1}^{J_{k}}p_{\mathsf{a},k}^n(c_k^i)p_{\mathsf{a},k}^n(d_k^j)\Psi_k^n(c_k^i,d_k^j) \label{eq:DAProbab},
\end{align}
    where
\begin{align}
    p_{\mathsf{a},k}^n(c_k^i=j) & = \begin{cases}\label{eq:to-av}
    \nu_k^{i,n}(\{\mathbf{z}_k^{j}\}),&j\in\{1,...,J_k \},\\
    \nu_k^{i,n}(\emptyset),&j = 0,
    \end{cases}
\end{align}    
\begin{align}    
    p_{\mathsf{a},k}^n(d_k^j=i)& =
    \begin{cases}\label{eq:mo-av}
    1,&i\in\{1,...,I_{k-1}\},\\
    \nu_k^{n}(\{\mathbf{z}_k^{j}\}),&i = 0,
    \end{cases}
\end{align}
\begin{align}
    \Psi_k^n(c_k^i,d_k^j) & = 
    \begin{cases} \label{eq:assocfact}
        0,& c_k^i = j, d_k^j \neq i, \text{ or } \\
        & c_k^i \neq j, d_k^j = i,\\
        1,& \text{otherwise,}
    \end{cases}
\end{align}
    {where $\Psi_k^n(c_k^i,d_k^j)$ ensures that only valid global associations are considered.}
    We observe the correspondence with \eqref{eq:PMBGlobalWeight}, where $\mathsf{p}_k^n(\mathbf{a}_k)=\beta_k^{\mathbf{a},n}$ and $\chi^n_k=Z_k^n$.
    Performing LBP algorithm~\cite[Appendix C, Fig. 9]{williams2015marginal} on the associated factor graph, yields the beliefs $\mathrm{bel}_k^n(c_k^i)$ and $\mathrm{bel}_k^n(d_k^j)$, so that the approximate marginal association probabilities are  
    $\mathsf{p}_k^{i,n}(j) = \mathrm{bel}_k^n(c_k^i=j)$ for $i\in \{1,...,I_{k-1}\}$ and $j\in \{0,...,J_k \}$; and 
    $\mathsf{p}_k^{I_{k-1}+j,n}(0) = \mathrm{bel}(d_k^{j}=0)$ for $j\in \{1,...,J_k \}$.
    From these approximate marginal association probabilities, the approximate PMB can be recovered by the so-called TOMB/P method from \cite{williams2015marginal}, as described in Appendix \ref{app:conversion}.
    
    {
    \begin{prop}[PMB-SLAM particle weight update]
    The updated particle weight is given by
    \begin{align}
         w_{\mathsf{u},k}^n \propto w_{\mathsf{p},k}^n \exp(-\mathsf{F}(\mathrm{bel}_k^n))
    \end{align}
    where $\mathsf{F}(\mathrm{bel}_k^n)$ is the so-called Bethe free energy of the beliefs~\cite{Yedidia2005BetheFree}. 
    \end{prop}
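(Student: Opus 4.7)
The plan is to reduce the proposition to the classical variational characterization of loopy belief propagation~(LBP): at any fixed point of LBP on a factor graph, the log partition function of the underlying factorized distribution is approximated by the negative Bethe free energy evaluated at the fixed-point beliefs, i.e.\ $\log Z \approx -\mathsf{F}(\mathrm{bel})$. Given that fact, I only need to identify the particle-weight correction in PMB-SLAM with the partition function $Z_k^n$ of the association pmf in~\eqref{eq:DAProbab}, and then invoke the Bethe approximation.

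To make the identification, I would start from Proposition~1 specialized to the PMB case. Because $|\mathcal{A}_{k-1}^n|=1$, the outer sum over $\mathbf{a}_{k-1}$ in~\eqref{eq:ELCompute} collapses and the previous global-hypothesis weights $\beta_{\mathsf{u},k-1}^{i,a_{k-1}^i,n}$ are all equal to one, so the weight correction reduces to $\sum_{\text{partition}} \prod_{j\in\mathcal{Z}_k^{\text{U}}} \nu_k^{n}(\{\mathbf{z}_k^j\}) \prod_{i=1}^{I_{k-1}} \nu_k^{i,n}(\mathcal{Z}_k^i)$, where the sum runs over all partitions $\uplus_{i}\mathcal{Z}_k^i\uplus \mathcal{Z}_k^{\text{U}}=\mathcal{Z}_k$ with $|\mathcal{Z}_k^i|\le 1$. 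This enumeration is in one-to-one correspondence with the set of valid global association hypotheses $\mathbf{a}_k=[\mathbf{c}_k^\top,\mathbf{d}_k^\top]^\top$. Rewriting the products via the landmark-wise factors $p_{\mathsf{a},k}^n(c_k^i)$ and the measurement-wise factors $p_{\mathsf{a},k}^n(d_k^j)$ of~\eqref{eq:to-av}--\eqref{eq:mo-av}, and enforcing bijectivity through $\Psi_k^n(c_k^i,d_k^j)$ of~\eqref{eq:assocfact}, the expression becomes $\sum_{\mathbf{a}_k}\prod_{i,j}p_{\mathsf{a},k}^n(c_k^i)p_{\mathsf{a},k}^n(d_k^j)\Psi_k^n(c_k^i,d_k^j)$, which is by definition the normalization constant $Z_k^n$ of~\eqref{eq:DAProbab}. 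This reproduces the observation $\chi_k^n=Z_k^n$ noted in the preceding remark, and yields $w_{\mathsf{u},k}^n \propto w_{\mathsf{p},k}^n\, Z_k^n$.

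The second step invokes the variational characterization of LBP of Yedidia et al.\ already cited in the statement: running LBP on the factor graph associated with~\eqref{eq:DAProbab} returns beliefs $\mathrm{bel}_k^n(c_k^i)$ and $\mathrm{bel}_k^n(d_k^j)$ at which the Bethe free energy $\mathsf{F}(\mathrm{bel}_k^n)$ is stationary, and at such a fixed point $\log Z_k^n \approx -\mathsf{F}(\mathrm{bel}_k^n)$. Substituting into the result of the previous paragraph gives the claim. The main obstacle to making this fully rigorous is precisely this last approximation: the association factor graph contains short loops between every $(c_k^i,d_k^j)$ pair via $\Psi_k^n$, so the Bethe identity is exact only on trees, and strictly speaking one obtains $w_{\mathsf{u},k}^n \appropto w_{\mathsf{p},k}^n\exp(-\mathsf{F}(\mathrm{bel}_k^n))$; I would therefore discuss in what regime (sparse associations, well-separated landmarks) the approximation is known to be tight, consistent with the tracking literature. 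Writing $\mathsf{F}(\mathrm{bel}_k^n)$ out explicitly in terms of the edge beliefs, the node beliefs, and the local factors $p_{\mathsf{a},k}^n$, $\Psi_k^n$ is then a routine substitution using the standard Bethe formula and does not require further argument.
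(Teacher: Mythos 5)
Your proposal is correct and follows essentially the same route as the paper: it identifies the particle-weight correction with the normalization constant $\chi_k^n = Z_k^n$ of the association pmf in~\eqref{eq:DAProbab} (the collapse of~\eqref{eq:ELCompute} under $\lvert\mathcal{A}_{k-1}^n\rvert=1$ and the correspondence with~\eqref{eq:PMBGlobalWeight}), and then invokes the Bethe approximation $-\ln Z_k^n \approx \mathsf{F}(\mathrm{bel}_k^n)$ at the LBP fixed point, exactly as in Appendix~\ref{app:PMB-weigth-update}. The only difference is cosmetic: the paper additionally records the nearly-degenerate-belief special case~\eqref{eq:BTFapp} for cheap evaluation, while you make explicit the (correct) caveat that the Bethe identity is only approximate on the loopy association graph, which the paper handles with the symbol $\appropto$/$\approx$.
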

    \begin{proof}
        See Appendix \ref{app:PMB-weigth-update}
    \end{proof}
    }
This proposition thus provides a tractable way to compute the particle weights.

\subsection{Marginalized PMB-SLAM Filter}\label{sec:MarginalizedPMB}
    We develop the proposed marginalized PMB-SLAM filter {in line with the Bayesian recursion in Section~\ref{sec:StateRecur}.}
    A marginalized PMB-SLAM filter is derived by marginalizing joint posterior density for a vehicle state and landmark $f(\mathbf{s}_k,\mathcal{X}|\mathcal{Z}_{1:k})$.
    When computing the posterior density of the set of landmarks $f(\mathcal{X}|\mathcal{Z}_{1:k})$,
    we approximate the PMBM as a PMB of the set of landmarks by using Lemma~\ref{lem:MinKLD}. Here, the nuisance parameter $\bm{\eta}$ corresponds to the global association hypotheses $\mathbf{a}_k$ and vehicle state $\mathbf{s}_k$. 
    
\subsubsection{Notation}
    For notational convenience,
    we will denote $f(\mathbf{s}_{k}|\mathcal{Z}_{1:k-1}) \triangleq f_{\mathsf{p},k}(\mathbf{s}_{k})$, $f(\mathbf{s}_{k}|\mathcal{Z}_{1:k}) \triangleq f_{\mathsf{u},k}(\mathbf{s}_{k})$, 
    $\lambda(\mathbf{x},m|\mathcal{Z}_{1:k}) \triangleq\lambda_{\mathsf{u},k}(\mathbf{x},m)$, 
    $f^i(\mathbf{x},m|\mathcal{Z}_{1:k}) \triangleq f_{\mathsf{u},k}^i(\mathbf{x},m)$,
    and $f(\mathcal{X}|\mathcal{Z}_{1:k}) \triangleq f_{\mathsf{u},k}(\mathcal{X})$.
    The purpose is to estimate $f_{\mathsf{u},k}(\mathbf{s}_k)$ of \eqref{eq:UpMarVeh} and $f_{\mathsf{u},k}(\mathcal{X})$ of \eqref{eq:UpMarObj}, and for which
    we adopt the RFS-likelihood function in \cite[eq. (25), (26)]{Garcia-Fernandez2018} and PMB representation in \eqref{eq:PMBdensity}, which are respectively plugged into  $g(\mathcal{Z}_k|\mathbf{s}_k,\mathcal{X})$ and $f_{\mathsf{u},k-1}(\mathcal{X})$.

    In the posterior at time $k-1$,   $\{\lambda_{\mathsf{u},k-1}(\mathbf{x},m)\}_{m\in\{\text{VA},\text{SP}\}}$, $f_{\mathsf{u},{k-1}}(\mathbf{s}_{k-1})$, and $\{\{f_{\mathsf{u},k-1}^{i}(\mathbf{x},m)\}_{m\in\{\text{BS},\text{VA},\text{SP}\}},r_{\mathsf{u},k-1}^{i}\}_{i=1}^{I_{k-1}}$ are given, which will be predicted and updated at time $k$.
        
\subsubsection{Vehice Prediction and Map Update}\label{sec:marginal_recursion}
    The vehicle density is predicted:
\begin{align}
    f_{\mathsf{p},k}(\mathbf{s}_k) = \int f(\mathbf{s}_{k}|\mathbf{s}_{k-1}) f_{\mathsf{u},k}(\mathbf{s}_{k-1}) \mathrm{d} \mathbf{s}_{k-1}. \label{eq:PredMarVeh}
\end{align}
    {The updated density of set of landmarks is represented as PMB components.
    Thus, we also use Step i)--iv) for updating Poisson and MB components, which were introduced in Section~\ref{sec:PMBintution} and were presented in Section~\ref{sec:PMBMpred}. 
    However, instead of conditioning on a particle $n$, we marginalize out the vehicle state.} 
    {See Appendix~\ref{app:MPMB-veh}.}

\subsubsection{Approximating PMBM as a PMB and Vehicle Update}\label{sec:marginal_PMBMasPMB}

    
    Similarly to Section~\ref{sec:AppPMBMtoaPMB}, we compute the approximate marginal association probabilities (i.e., $\mathsf{p}_k^{i}(j) = \mathrm{bel}(c^i=j)$ for $i\in \{1,...,I_{k-1}\}$ and $j\in \{0,...,J_k \}$; and 
    $\mathsf{p}_k^{I_{k-1}+j}(0) = \mathrm{bel}(d^{j}=0)$ for $j\in \{1,...,J_k \}$).
    Then, we compute the density and existence probability of landmarks for the PMB, similarly to Appendix \ref{app:conversion}.
\begin{prop}
    {The vehicle posterior density is proportionally approximated as a mixture density
\begin{align}
    f_{\mathsf{u},k}(\mathbf{s}_k)
    & \appropto 
    \sum_{\uplus_{i=1}^{I_{k-1}}\mathcal{Z}_k^i \uplus \mathcal{Z}_k^\text{U} =\mathcal{Z}_k} 
    \prod_{\mathbf{z} \in \mathcal{Z}_k^\text{U}} \nu_k(\{\mathbf{z}\}) \prod_{i=1}^{I_{k-1}} \nu_k^i(\mathcal{Z}_k^i) \nonumber\\
    &\times q(\mathbf{s}_k|\mathcal{Z}_k^\text{U},\mathcal{Z}_k^1,...,\mathcal{Z}_k^{I_{k-1}}),
    \label{eq:MarVehPosApp}
\end{align}
where $q(\mathbf{s}_k|\mathcal{Z}_k^\text{U},\mathcal{Z}_k^1,...,\mathcal{Z}_k^{I_{k-1}})$ is a normalized density, describing the posterior density conditioned on the association $\mathcal{Z}_k^\text{U},\mathcal{Z}_k^1,...,\mathcal{Z}_k^{I_{k-1}}$. The constants $\nu_k(\{\mathbf{z}\})$, $\nu_k^i(\emptyset)$, and $\nu_k^i(\{\mathbf{z}\})$ were determined in Appendix~\ref{app:MPMB-veh}.
}
{We can express \eqref{eq:MarVehPosApp} 
as 
\begin{align}
    & f_{\mathsf{u},k}(\mathbf{s}_k) \approx \label{eq:MarVehPosApp2}  \\
    &  \sum_{\mathbf{a}_k} \mathsf{p}_k(\mathbf{a}_k)  f_{\mathsf{p},k}(\mathbf{s}_k) \prod_{j \in \text{U}_k(\mathbf{a}_k)}\frac{\psi_k(\mathbf{z}_j,\mathbf{s}_k)}{\nu_k(\{\mathbf{z}_j\})}\prod_{i=1}^{I_{k-1}} \frac{ q_i(\mathcal{Z}_k^{a_k^i}|\mathbf{s}_k)}{\nu_k^i(\mathcal{Z}_k^{a^i_k})}, \notag 
\end{align}
where $q_i(\mathbf{z}_j|\mathbf{s}_k)$ and $q_i(\emptyset|\mathbf{s}_k)$ are defined in Appendix \ref{app:MPMB-target}. 
    }
\end{prop}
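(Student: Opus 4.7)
The plan is to start from the marginal vehicle update formula in \eqref{eq:UpMarVeh}, substitute the PMB representation of the prior landmark density $f(\mathcal{X}|\mathcal{Z}_{1:k-1})$ from \eqref{eq:PMBdensity}, substitute the standard RFS measurement likelihood (as adopted from \cite{Garcia-Fernandez2018}) for $g(\mathcal{Z}_k|\mathbf{s}_k,\mathcal{X})$, and then carry out the set integral $\int\cdot\,\delta\mathcal{X}$. Because both the PMB prior and the measurement likelihood decompose over the Poisson component $\mathcal{X}^{\text{U}}$ and the individual Bernoulli components $\mathcal{X}^i$, the set integral factorizes along these components and produces a single sum over all ways to partition the measurement set $\mathcal{Z}_k$ into pieces $\mathcal{Z}_k^\text{U},\mathcal{Z}_k^1,\ldots,\mathcal{Z}_k^{I_{k-1}}$ with $|\mathcal{Z}_k^i|\le 1$. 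This is exactly the outer sum in \eqref{eq:MarVehPosApp}.

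Next I would identify the factors. For each Bernoulli $i$, carrying out the set integral against its prior Bernoulli density yields a factor that, when the landmark state is integrated out, gives an $\mathbf{s}_k$-dependent kernel which is the numerator of the landmark update in Appendix~\ref{app:MPMB-veh}. Normalizing that kernel against $\mathbf{s}_k$ (i.e., writing it as $\nu_k^i(\mathcal{Z}_k^i)\cdot q_i(\mathcal{Z}_k^i|\mathbf{s}_k)/\nu_k^i(\mathcal{Z}_k^i)$) extracts the constant $\nu_k^i(\mathcal{Z}_k^i)$ defined in the appendix. Similarly, the Poisson factor produces, for each $\mathbf{z}\in\mathcal{Z}_k^\text{U}$, a kernel $\psi_k(\mathbf{z},\mathbf{s}_k)$ whose $\mathbf{s}_k$-marginal is $\nu_k(\{\mathbf{z}\})$. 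Multiplying the per-partition constants together gives the weight $\prod_{\mathbf{z}\in\mathcal{Z}_k^\text{U}}\nu_k(\{\mathbf{z}\})\prod_i \nu_k^i(\mathcal{Z}_k^i)$, and the remaining $\mathbf{s}_k$-dependent part is a product of normalized densities, which I would collect into $q(\mathbf{s}_k|\mathcal{Z}_k^\text{U},\mathcal{Z}_k^1,\ldots,\mathcal{Z}_k^{I_{k-1}})$ together with the predicted density $f_{\mathsf{p},k}(\mathbf{s}_k)$. This establishes \eqref{eq:MarVehPosApp}.

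To obtain \eqref{eq:MarVehPosApp2}, I would simply reindex the sum over partitions as a sum over global hypotheses $\mathbf{a}_k$, using the correspondence $a_k^i\leftrightarrow \mathcal{Z}_k^{a_k^i}$ and $\text{U}_k(\mathbf{a}_k)\leftrightarrow\mathcal{Z}_k^\text{U}$ that is already introduced in \eqref{eq:betaGlobal}. Dividing and multiplying by $\prod_j\nu_k(\{\mathbf{z}_j\})\prod_i\nu_k^i(\mathcal{Z}_k^{a_k^i})$ then lets me recognize $\mathsf{p}_k(\mathbf{a}_k)\propto \prod_i\nu_k^i(\mathcal{Z}_k^{a_k^i})\prod_{j\in\text{U}_k(\mathbf{a}_k)}\nu_k(\{\mathbf{z}_j\})$, after which the ratios $q_i(\mathcal{Z}_k^{a_k^i}|\mathbf{s}_k)/\nu_k^i(\mathcal{Z}_k^{a_k^i})$ and $\psi_k(\mathbf{z}_j,\mathbf{s}_k)/\nu_k(\{\mathbf{z}_j\})$ remain in the summand, reproducing \eqref{eq:MarVehPosApp2} exactly. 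The symbol $\appropto$ rather than $=$ is explained by the PMB-to-PMB projection in Lemma~\ref{lem:MinKLD} used when treating the global hypothesis $\mathbf{a}_k$ as a nuisance variable, and by replacing the exact joint pmf $\mathsf{p}_k(\mathbf{a}_k)$ by the product of LBP-based marginals $\mathsf{p}_k^{i}(j)$.

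The main obstacle I anticipate is the bookkeeping between the per-partition constants produced by the set integral and the constants $\nu_k(\{\mathbf{z}\})$, $\nu_k^i(\emptyset)$, $\nu_k^i(\{\mathbf{z}\})$ implicitly defined in Appendix~\ref{app:MPMB-veh}: the identification of these constants as the $\mathbf{s}_k$-marginals of the corresponding update kernels is what turns the raw expression from the set integral into a mixture of properly normalized conditional densities. A secondary subtlety is justifying the $\appropto$ in \eqref{eq:MarVehPosApp}, namely controlling the error introduced when the true association pmf in \eqref{eq:DAProbab} is replaced by the product of BP beliefs; this is handled by appealing to the KLD-minimization of Lemma~\ref{lem:MinKLD} already invoked for the landmark density, applied simultaneously to the vehicle marginal.
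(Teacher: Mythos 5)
Your overall route coincides with the paper's: substitute the PMB prior and the RFS likelihood of \cite{Garcia-Fernandez2018} into the marginal update \eqref{eq:UpMarVeh}, factorize the set integral over the Poisson and Bernoulli components (the paper does this via \cite[Corollary~2]{Garcia-Fernandez2018}, arriving at \eqref{eq:MarVehicleUp} with the PPP kernel $\psi_k(\mathbf{z},\mathbf{s}_k)$ of \eqref{eq:appIntensity} and the Bernoulli kernels $q_i(\mathcal{Z}_k^i|\mathbf{s}_k)$), and then reindex the partition sum as a sum over global hypotheses $\mathbf{a}_k$ to obtain \eqref{eq:MarVehPosApp2}. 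That part is fine.

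The gap is in how you extract the mixture weights $\prod_{\mathbf{z}\in\mathcal{Z}_k^{\text{U}}}\nu_k(\{\mathbf{z}\})\prod_{i}\nu_k^i(\mathcal{Z}_k^i)$. You present this as exact bookkeeping (``normalizing each kernel against $\mathbf{s}_k$ extracts $\nu_k^i(\mathcal{Z}_k^i)$, and the remainder is a product of normalized densities''), but the constants $\nu_k^i(\mathcal{Z}_k^i)=\int f_{\mathsf{p},k}(\mathbf{s}_k)\,q_i(\mathcal{Z}_k^i|\mathbf{s}_k)\,\mathrm{d}\mathbf{s}_k$ in \eqref{eq:intsecterm} and $\nu_k(\{\mathbf{z}\})$ are \emph{individual} $\mathbf{s}_k$-marginals, whereas the true weight of each association term is the \emph{joint} integral $\int f_{\mathsf{p},k}(\mathbf{s}_k)\,q(\mathcal{Z}_k^{\text{U}}|\mathbf{s}_k)\prod_i q_i(\mathcal{Z}_k^i|\mathbf{s}_k)\,\mathrm{d}\mathbf{s}_k$. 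Since every kernel is coupled through the common $\mathbf{s}_k$, the joint integral does not factor into the product of marginals; the paper performs exactly this replacement as an explicit approximation in \eqref{eq:normalconstveh} (together with splitting the PPP factor and discarding the common term $\int f_{\mathsf{p},k}(\mathbf{s}_k)e^{-\int\psi_k(\mathbf{z},\mathbf{s}_k)\mathrm{d}\mathbf{z}}\mathrm{d}\mathbf{s}_k$), justified ``in the sense of'' Lemma~\ref{lem:MinKLD}. This step, not belief propagation, is the source of the $\appropto$ in \eqref{eq:MarVehPosApp} and of the $\approx$ in \eqref{eq:MarVehPosApp2}, where $\int\Phi(\mathbf{s}'_k|\mathbf{a}_k)\mathrm{d}\mathbf{s}'_k$ is approximated by the product of the $\nu$'s. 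Your attribution of the approximation to replacing the exact pmf \eqref{eq:DAProbab} by LBP marginals is misplaced: LBP/TOMB-P enters only later (Section~\ref{sec:marginal_PMBMasPMB} and the implementation), and the proposition is stated with hypothesis weights of the form \eqref{eq:PMBGlobalWeight}. To close the argument you must add the factorization approximation of the per-hypothesis normalization constant; without it, your claimed mixture weights and the normalization of $q(\mathbf{s}_k|\mathcal{Z}_k^\text{U},\mathcal{Z}_k^1,\ldots,\mathcal{Z}_k^{I_{k-1}})$ do not follow.
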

\begin{proof}
    {See Appendix~\ref{app:MPMB-target}}
\end{proof}

    As the nuisance parameters are marginalized out, the correlation between the vehicle state and the landmarks is lost. However, the marginal densities allow for efficient representation and computation, which will be implemented in Section~\ref{sec:Implementation}.

\subsection{{Connection of BP-SLAM to Marginalized PMB-SLAM}}\label{sec:BPtoMPMB}
    {We discuss connection of the BP-SLAM filter~\cite{Erik_BPSLAM_TWC2019,Erik_AOABPSLAM_ICC2019} to the proposed marginalized PMB-SLAM filter, covering the data associations, steps i)--iv) in the PMB update, and also the vehicle posterior computation.}
    
\subsubsection{{Data Association}}
    {Both BP-SLAM and marginalized PMB-SLAM use belief propagation to compute marginal data association probabilities.  The iterative data association step in \cite[Sec.~V-B3]{Erik_BPSLAM_TWC2019} is identical to performing LBP on~\eqref{eq:DAProbab}. Hence, the corresponding beliefs are identical in both methods. However, internally BP-SLAM uses message in the computation, not beliefs.}

\subsubsection{{PMB Component Update}}
    {In BP-SLAM, the landmarks are separated into undetected features (similar to our Step i)), new potential features~(PFs) (similar to our Step ii)), and legacy PFs (similar to our Step iii) and iv)). 
    When we adopt the likelihood functions eqs.~\eqref{eq:likelihood_Poi} and~\eqref{eq:likelihoodBer} with the association variables $c_k^i$ and $d_k^j$ of~\eqref{eq:DAProbab} instead of the likelihood functions in~\cite[Sec.~III-D]{Erik_BPSLAM_TWC2019}, the PF beliefs on the factor graph \cite[Fig.~2]{Erik_BPSLAM_TWC2019} are identical to the proposed marginalized PMB implementation.
    The connections are further detailed as follows:
    \begin{itemize}
        \item Step i) There is no connected message passing step in~\cite{Erik_BPSLAM_TWC2019,Erik_AOABPSLAM_ICC2019}.
        As an ad-hoc modification, the PHD intensity is adopted outside of the factor graph framework and posterior density expression.
        \item Step ii) The message from likelihood function for new PF to the association variable is identical to $\nu_k(\{\mathbf{z}_k^j\})$ of Step ii). The belief for new PF, i.e.,~\cite[eq.~(35)]{Erik_BPSLAM_TWC2019} can be represented as the existence probability~\eqref{eq:TOMBP_NLr} and Bernoulli density~\eqref{eq:TOMBP_NLf}. 
        \item Step iii) and iv) The message from likelihood function for legacy PF to the association variable,~i.e.,~\cite[eq.~(25)]{Erik_BPSLAM_TWC2019},  is identical to $\nu_k^i(\emptyset)$ of Step iii) with $c_k^i=0$, and to $\nu_k^i(\{\mathbf{z}_k^j\})$ of Step iv) with $c_k^i=j$~(see~\eqref{eq:to-av}). After normalization, the belief for legacy PF, i.e.,~\cite[eq.~33]{Erik_BPSLAM_TWC2019} can be represented as the existence probability~\eqref{eq:TOMBP_ELr} and Bernoulli density~\eqref{eq:TOMBP_ELf}.
    \end{itemize}
    It follows that the landmark update in BP-SLAM is identical to marginalized PMB-SLAM, though with a slightly different likelihood function and with an ad-hoc version of Step i).}

    
\subsubsection{{Vehicle State Update}}
    {
    The belief for the vehicle, i.e.,~\cite[eq.~37]{Erik_BPSLAM_TWC2019}, is computed with the  association messages, averaging different local associations \cite[eq.~32]{Erik_BPSLAM_TWC2019} and integrating out the landmark state. Messages over the vehicle state from each landmark are then multiplied with the prior, leading to the vehicle posterior belief. The vehicle state update in marginalized PMB \eqref{eq:MarVehPosApp2}  can be expressed in a similar form,  but has an additional factor $\psi_k(\mathbf{z}_j,\mathbf{s}_k)$ that accounts for the undetected landmarks.}

\section{Implementation of the Marginalized PMB-SLAM Filter}
\label{sec:Implementation}
    
    The implementation of the PMBM-SLAM and PMB-SLAM filters is standard, due to the conditioning of the landmark state on the vehicle state. Hence, a standard representation and implementation can be used \cite{williams2015marginal}. In the marginalized PMB-SLAM, on the other hand, the landmark is not conditioned on the vehicle state and thus requires a novel implementation. We chose a CKF-based implementation, due to the nonlinear nature of the measurements \eqref{eq:measurementModel}, where both the vehicle state and Bernoulli components in the PMB are represented by Gaussian densities.

\subsubsection{{Initialization}}
    {We set  $f_{\mathsf{u},k}(\mathbf{s}_{0})=\mathcal{N}(\mathbf{s};\mathbf{s}_{\mathsf{u},0},\mathbf{U}_{\mathsf{u},0})$.
    We generate cubature points (CPs)\footnote{
     We define $\mathsf{d}{(\star)}$ as the number of unknown variables of $\star$, and
    we define $\epsilon^c_{\star}$ as the $c$-th column vector of the matrix $\mathbf{C}\in \mathbb{R}^{\mathsf{d}{(\star)}\times 2\mathsf{d}{(\star)}}$, computed as $\mathbf{C}=\sqrt{\mathsf{d}(\star)}[\mathbf{I}^{\mathsf{d}{(\star)}},-\mathbf{I}^{\mathsf{d}{(\star)}}]$.  $\sqrt{\mathsf{d}(\star)}[\mathbf{I}^{\mathsf{d}{(\star)}},-\mathbf{I}^{\mathsf{d}{(\star)}}]\in \mathbb{R}^{\mathsf{d}{(\star)}\times 2\mathsf{d}{(\star)}}$, where $\mathbf{I}^{\mathsf{d}{(\star)}}\in \mathbb{R}^{\mathsf{d}{(\star)}\times \mathsf{d}{(\star)}}$ is the identity matrix.} and weights with $\mathbf{U}_{\mathsf{u},0} = \mathbf{C}\mathbf{C}^\top$, $\{\mathbf{s}_{0}^c,w_{\mathsf{u},0}^c\}_{c=1}^{2\mathsf{d}(\mathbf{s}_0)}$, where $\mathbf{s}_{0}^c=\mathbf{C}\epsilon^{c}_{\mathbf{s}_0} + \mathbf{s}_{\mathsf{u},0}$, 
    and weights $w_{\mathsf{u},0}^c=1/(2\mathsf{d}(\mathbf{s}_0))$.
    We have 
    $\{\lambda_{\mathsf{u},0}(\mathbf{x},m)=\kappa(m) \mathcal{U}(\mathbf{x})\}_{m\in\{\text{VA},\text{SP}\}}$,
    $f_{\mathsf{u},0}^{1}(\mathbf{x},\text{BS})=\mathcal{N}(\mathbf{x};\mathbf{x}_\text{BS},\mathbf{I}^{\mathsf{d}(\mathbf{x})})$, $\{f_{\mathsf{u},0}^{1}(\mathbf{x},m)=0\}_{m\in\{\text{VA},\text{SP}\}}$, $r_{\mathsf{u},0}^{1}=1$.}
    
    \subsubsection{Vehicle Density Prediction}
    To implement \eqref{eq:PredMarVeh}, from $\mathcal{N}(\mathbf{s}_{k-1};\mathbf{s}_{\textsf{u},k-1},\mathbf{U}_{\textsf{u},k-1})$ we decompose $\mathbf{U}_{\mathsf{u},k-1} = \mathbf{C}\mathbf{C}^\top$ and generate CPs for $c=1,...,2\mathsf{d}(\mathbf{s}_k)$: $\mathbf{s}_{\mathsf{u},k-1}^{c} = \mathbf{C}\epsilon^{c}_{\mathbf{s}_k} + \mathbf{s}_{\mathsf{u},k-1}$,
    and weights $w_{\mathsf{u},k-1}^c=1/(2\mathsf{d}(\mathbf{s}_k))$.
    We propagate the CPs as ${\mathbf{s}}^c_{k} = \mathsf{v}(\mathbf{s}^c_{\mathsf{u},k-1})$ and $w_k^c = w_{\mathsf{u},k-1}^c$ for all $c$, and we compute the predicted vehicle density $f_{\mathsf{p},k}(\mathbf{s}_{k})=\mathcal{N}(\mathbf{s}_k;\mathbf{s}_{\mathsf{p},k},\mathbf{U}_{\mathsf{p},k})$, where $\mathbf{s}_{\mathsf{p},k} =
    \sum_{c=1}^{2\mathsf{d}(\mathbf{s}_k)}w_{k}^c{\mathbf{s}}^c_{k}$ and $\mathbf{U}_{\mathsf{p},k}=\sum_{c=1}^{2\mathsf{d}(\mathbf{s}_k)}w_{k}^c{\mathbf{s}}^c_{k}{\mathbf{s}}_{k}^{c\top} - \mathbf{s}_{\mathsf{p},k} \mathbf{s}_{\mathsf{p},k}^{\top}+\mathbf{Q}.$

\subsubsection{PMB Update}
The PMB update requires the CKF implementation of the update steps i)--iv), marginalizing out the vehicle state. The details are provided in Appendix \ref{app:MPMB-mapCKF}.

\subsubsection{Vehicle Posterior Computation}
To compute the posterior, we first of all approximate $\psi_k(\mathbf{z}_j,\mathbf{s}_k)$ as a constant in $\mathbf{s}_k$, as measurements related to undetected targets provide limited information regarding the vehicle state. Similarly, $q_i(\emptyset|\mathbf{s}_k)$ is approximated as constant.  
Secondly, to avoid a complex mixture density for 
in $f_{\mathsf{u},k}(\mathbf{s}_k)$, we limit the summation in \eqref{eq:MarVehPosApp2} to the most likely association (determined via Murthy's algorithm or by making hard decision based on the marginal data association belief) and by considering only landmarks for which  $\mathsf{p}_{\mathsf{u},k-1}^i(m^i) > T_\text{EP}$. Then, 
\begin{align}
    f_{\mathsf{u},k}(\mathbf{s}_k)  \appropto & \mathcal{N}(\mathbf{s}_k;\mathbf{s}_{\mathsf{p},k},\mathbf{U}_{\mathsf{p},k}) \prod_{i\in \mathcal{I}_k^*} \int \mathcal{N}(\mathbf{z}_k^j;\mathsf{h}(\mathbf{s}_{k},\mathbf{x},m),{\mathbf{R}_k^j}) \nonumber\\
    &
    \times \mathcal{N}(\mathbf{x}; \mathbf{x}_{\mathsf{u},k-1}^i(m^i), \mathbf{P}_{\mathsf{u},k-1}^i(m^i))\mathrm{d} \mathbf{x} \\
    = & \int f([\mathbf{s}_k^\top,{\mathbf{y}_k}^\top]^\top|\mathcal{Z}_{1:k})\mathrm{d}\mathbf{y}_k, \label{eq:jointUpdate}
\end{align}
    where we utilize $\lvert \mathcal{I}_k^*\rvert$ determined Bernoulli with $\mathsf{p}_{\mathsf{u},k-1}^i(m^i) > T_\text{EP}$, $\mathbf{y}_k=[{\mathbf{x}_{\mathsf{u},k-1}^{\mathcal{I}_k^*(1)}}^\top,...,{\mathbf{x}_{\mathsf{u},k-1}^{\mathcal{I}_k^*(\lvert \mathcal{I}_k^* \rvert)}}^\top]^\top$
    and $f([\mathbf{s}_k^\top,{\mathbf{y}_k}^\top]^\top|\mathcal{Z}_{1:k})$ of~\eqref{eq:jointUpdate} is computed by the CKF.
\begin{figure*}[t!]
\begin{centering}
	\subfloat[at time $k=3$ \label{Fig:T_2}]
	{\includegraphics[width=.65\columnwidth]{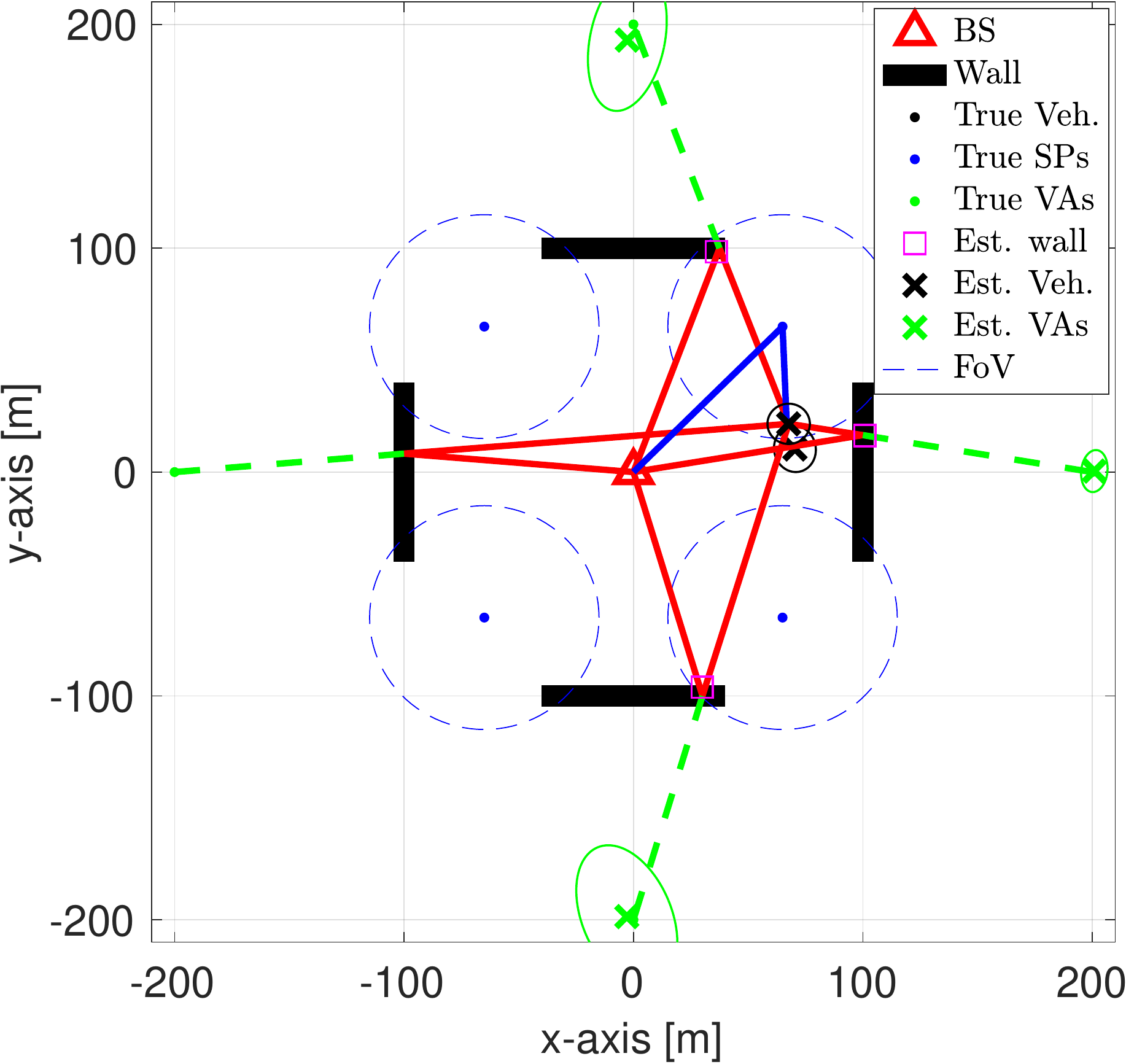}}~~
	\subfloat[at time $k=7$ \label{Fig:T_7}]
	{\includegraphics[width=.65\columnwidth]{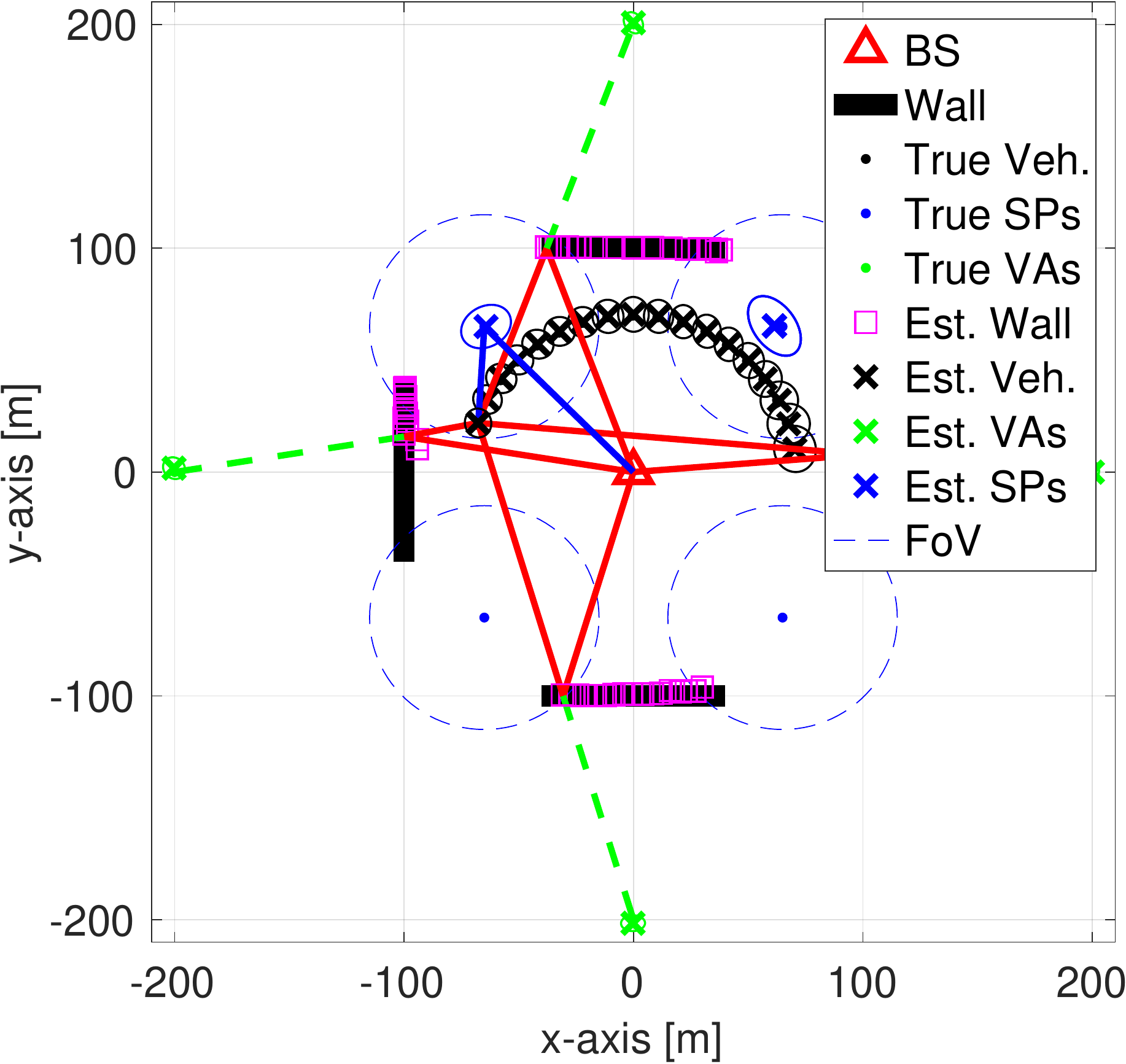}}~~
	\subfloat[at time $k=37$ \label{Fig:T_37}]
	{\includegraphics[width=.65\columnwidth]{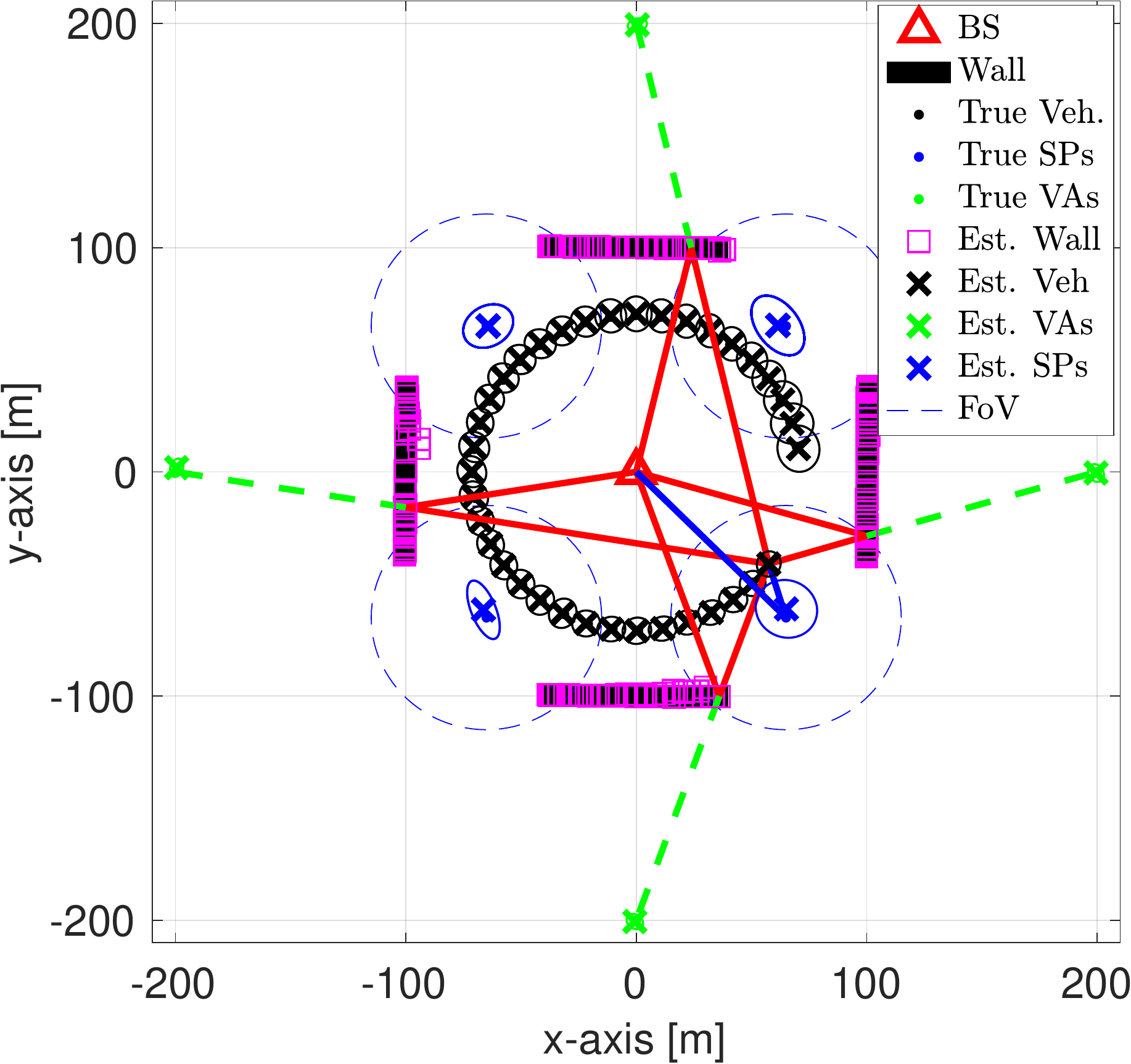}}
	\caption{{Exemplary results of marginalized PMB-SLAM filter at time (a) $k=3$, (b) $k=7$, and (c) $k=37$ in the 5G mmWave vehicular network of Section~\ref{sec:SimulSet}.
	The estimated results of the vehicle and landmarks are represented by x and y location elements of the filtered Gaussians, and filtered mean and covariance are respectively represented by `x' markers and solid ellipses.
	The visible signal paths are represented by the solid line.
	}}
	\label{Fig:SI}
	\par
\end{centering}
\end{figure*}

\section{Numerical Results}\label{sec:Results}
In this section, simulation setup for evaluating the proposed SLAM filters is introduced. Subsequently, performance and results are discussed. {As our focus is on the relative comparison between the different PMBM-based  filters, no extensive evaluation against methods based on other set densities is conducted. The interested reader is referred to~\cite{Garcia-Fernandez2018} for comparison with cardinalized probability hypothesis density~(CPHD), to~\cite{Garcia_Trajectory_2020} for comparison with $\delta$-GLMB and LMB, and to~\cite{meyer2018message} for comparison with BP. 
}

\subsection{Simulation Setup}\label{sec:SimulSet}
   To show the efficiency of the proposed SLAM filters, we evaluate and discuss the performance.
    We consider 3D vehicular networks~(see, Fig.~\ref{Fig:SI}), where a single vehicle is moving with the dynamics~\eqref{eq:mobility}, following the mobility model \cite[Chapter 5]{thrun2005probabilistic}, and fixed landmarks~(a single BS, four VAs, and four SPs) are located.
    The measurements can be obtained with the detection probability within the FoV, including clutter.
    For the details of the network size, initial vehicle state, mobility model, locations of landmarks, FoV, clutter intensity, and performance metrics, we adopt the same definition and values from~\cite[Sec.~VI-A]{Hyowon_TWC2020}.

    We consider BS and vehicle to be equipped with uniform planar arrays, and the number of antennas at the BS and vehicle is respectively 64 ($8\times 8 $) and 16 ($4\times 4$). Each array is equipped with 1 radio-frequency chain, so that analog beamforming is utilized. 
    The  carrier frequency is set to 28~GHz, while  the transmitted signal and noise power spectral density are respectively 5~dBm and $-$174~dBm/Hz.
    We consider OFDM pilot signals with 64 subcarriers in 200~MHz bandwidth and 16 OFDM symbols. The beamforming weights are set randomly during each OFDM symbol. To derive the measurement noise covariance $\mathbf{R}_k^j$, we consider the 5G mmWave specific features and adopt the Fisher information matrix~(FIM) of channel parameters~\cite{Zohair_5GFIM_TWC2018}.
   To implement the proposed SLAM filters, the following details are considered. We implement the intensity density for undetected landmarks by the uniform density $\mathcal{U}(\mathbf{x})$ with the weight $\kappa(m)$: $\lambda(\mathbf{x},m)=\kappa(m)\mathcal{U}(\mathbf{x})$ and set $\kappa(m) = 2.37\times 10^{-6}$ for $m = \{\text{VA},\text{SP}\}$.
    The adaptive detection probability $\mathsf{p}(\mathbf{s}_k,\mathbf{x},m)$ is computed per MB and landmark type. We set $\mathsf{p}(\mathbf{s}_k,\mathbf{x},m)~\approx 0. $ in Step i), $\mathsf{p}(\mathbf{s}_k,\mathbf{x},m)\approx P_\text{D}$ in Step ii). In Step iii), we set $\mathsf{p}(\mathbf{s}_k,\mathbf{x},m)\approx P_\text{D}$ if $M_\text{d}<T_\text{G}$, otherwise, $\mathsf{p}(\mathbf{s}_k,\mathbf{x},m)~\approx 0$, where $M_\text{d}$ is ellipsoidal gating distance and $T_\text{G}$ is the gating threshold, computed by using the Chi-square CDF with gate probability $P_\text{G}$. We set $P_{\text{D}}=0.95$ and $P_\text{G}=0.99$.
    We use a CKF approximation to compute the landmark posterior densities.
    For numerical robustness in the CKF update for Step ii) and iv), 
    we replace ${\mathbf{R}_k^j}$ with  ${\mathbf{R}_{\mathsf{U},k}^j=4\times \mathbf{R}_k^j}$.
    {Due to the nonlinear measurement, the updated covariance of vehicle state is excessively concentrated in the CKF. To mitigate this adverse effect, the dithering method is adopted~\cite{Gustafsson_EKF_ICCASSP2008}.}
    In data association, we set the maximum allowable number of global hypotheses $B_\text{max}=200$~\cite{Garcia-Fernandez2018} and the number of iterations for computing the marginal association probabilities $L_\text{max} = 100$~\cite{Williams_MAPBP_TAES2014}.
    After the update step at every time $k$, the pruning and merging step is performed, which consists of the following steps. 
    A Bernoulli is pruned when its existence probability is less than $10^{-5}$,
    and a global hypothesis is pruned when its weight is less than $10^{-4}$.
    
    {A landmark is detected when the existence probability of Bernoulli is larger than $T_\text{EP}=0.4$, and its landmark type is determined as $m^* = \max_{m} e(m)$.}
    Simulation results were obtained by averaging over 10 Monte Carlo runs, and by using
    MATLAB implementation for the proposed three SLAM methods, which were executed on a PC with 3.2 GHz Intel Core i7-8700 CPU and 32 GB RAM, and the operating system is Windows 10 Pro 64-bit.

\begin{figure}
\begin{centering}
	\subfloat[\label{Fig:VA_GOSPA}]
	{\includegraphics[width=.95\columnwidth]{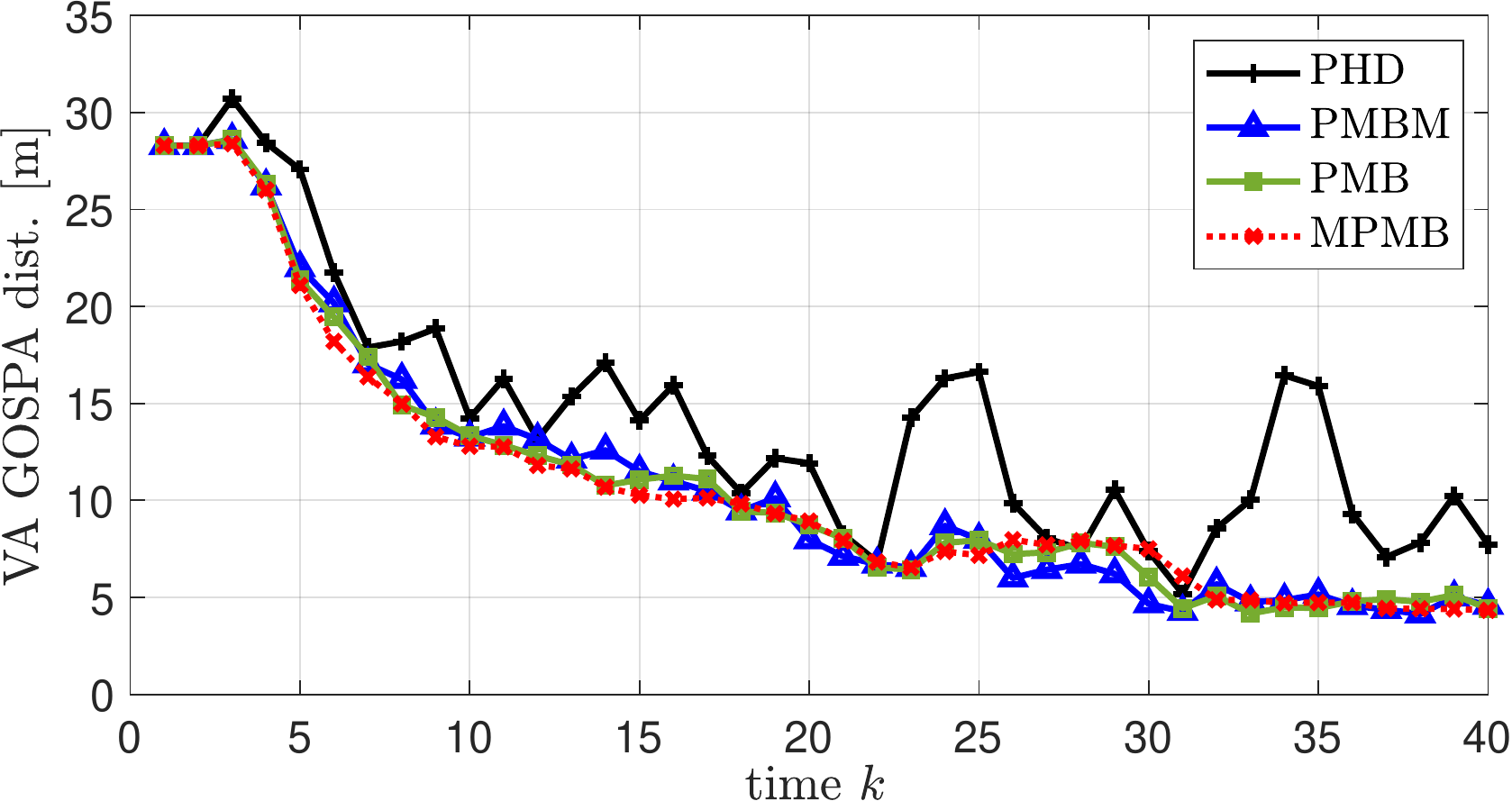}}\hfill
	\subfloat[\label{Fig:SP_GOSPA}]
	{\includegraphics[width=.95\columnwidth]{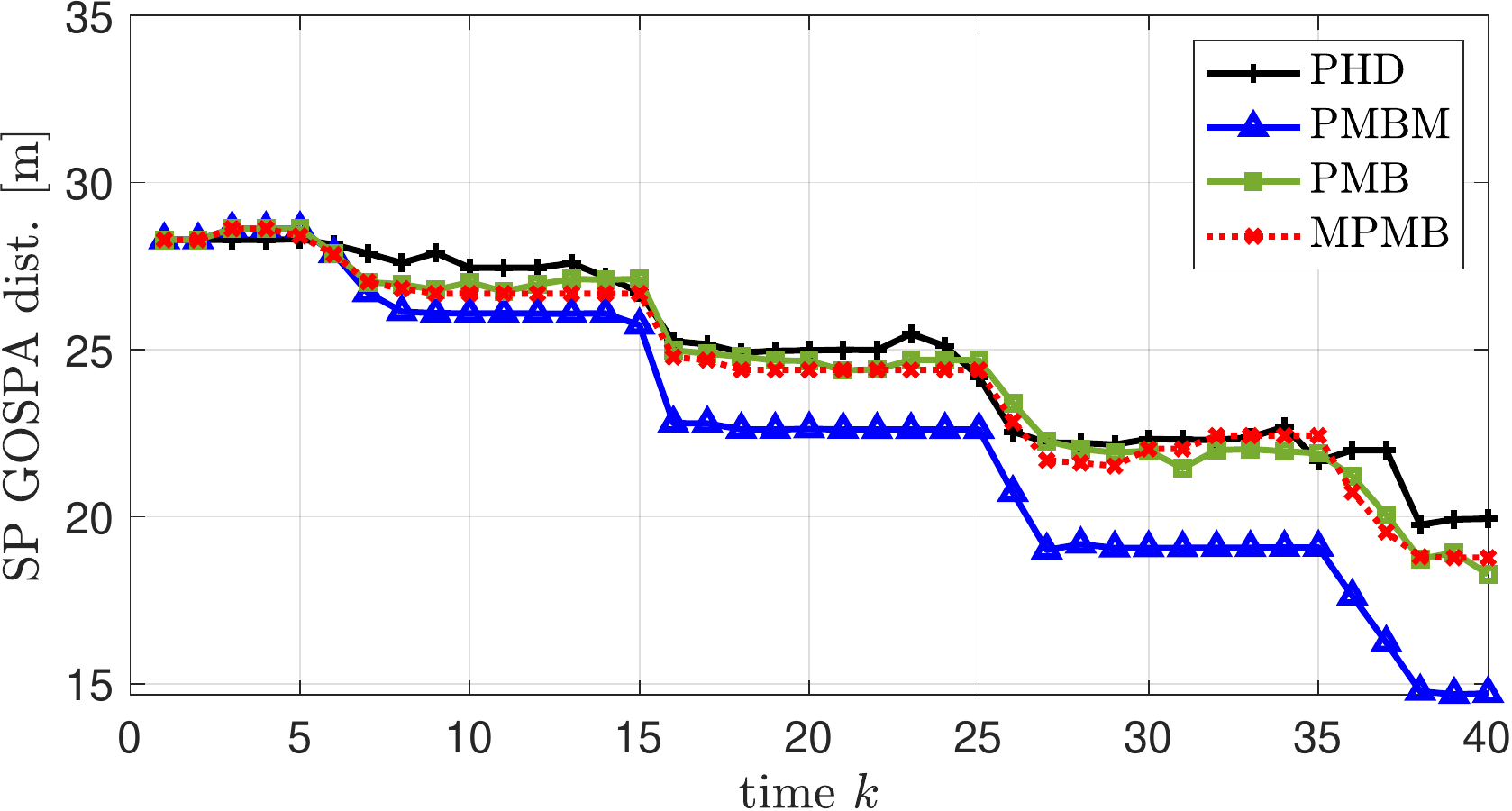}}
	\caption{Comparison of the average GOSPA of (a) VA and (b) SP for the three proposed SLAM filters (PMBM, PMB, marginalized PMB) and PHD-SLAM~\cite{Hyowon_TWC2020}.
	}
	\label{Fig:Mapping}
	\par
\end{centering}
\end{figure}


\begin{figure}
\begin{centering}
	\subfloat[\label{Fig:Loc_RMSE}]
	{\includegraphics[width=.95\columnwidth]{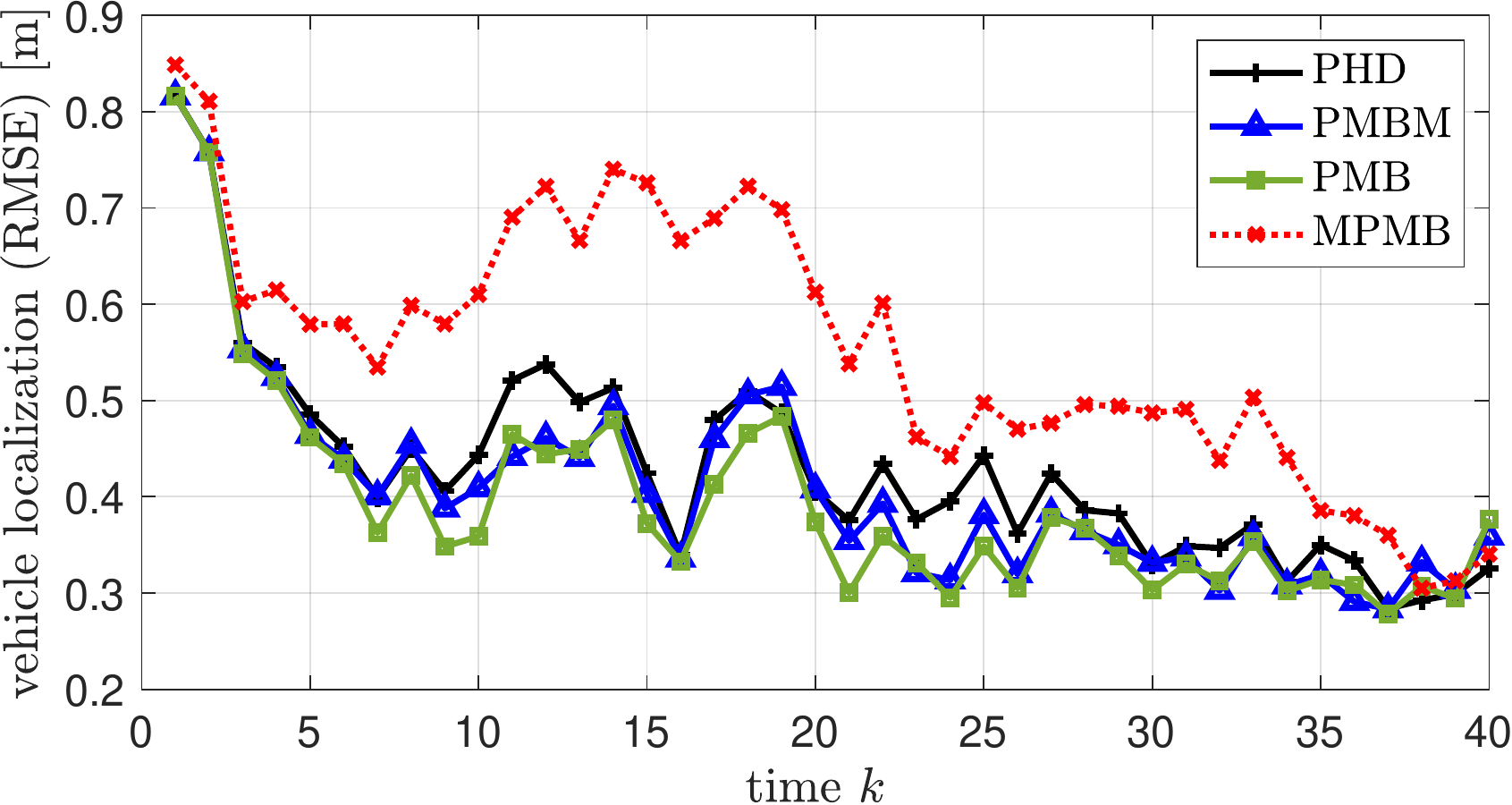}}\hfill
	\subfloat[\label{Fig:Bias_RMSE}]
	{\includegraphics[width=.95\columnwidth]{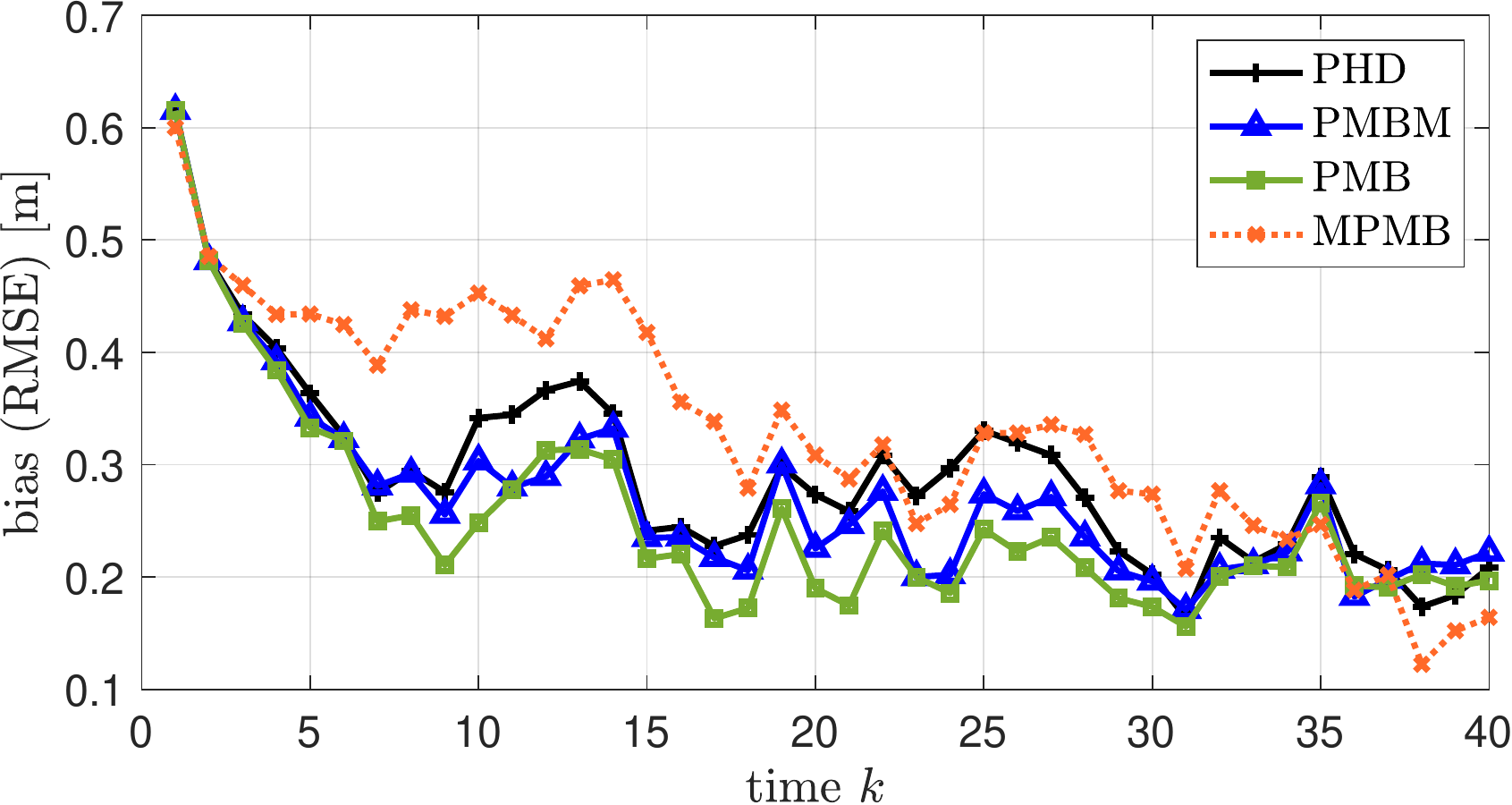}}\hfill
	\subfloat[\label{Fig:Heading_RMSE}]
	{\includegraphics[width=.95\columnwidth]{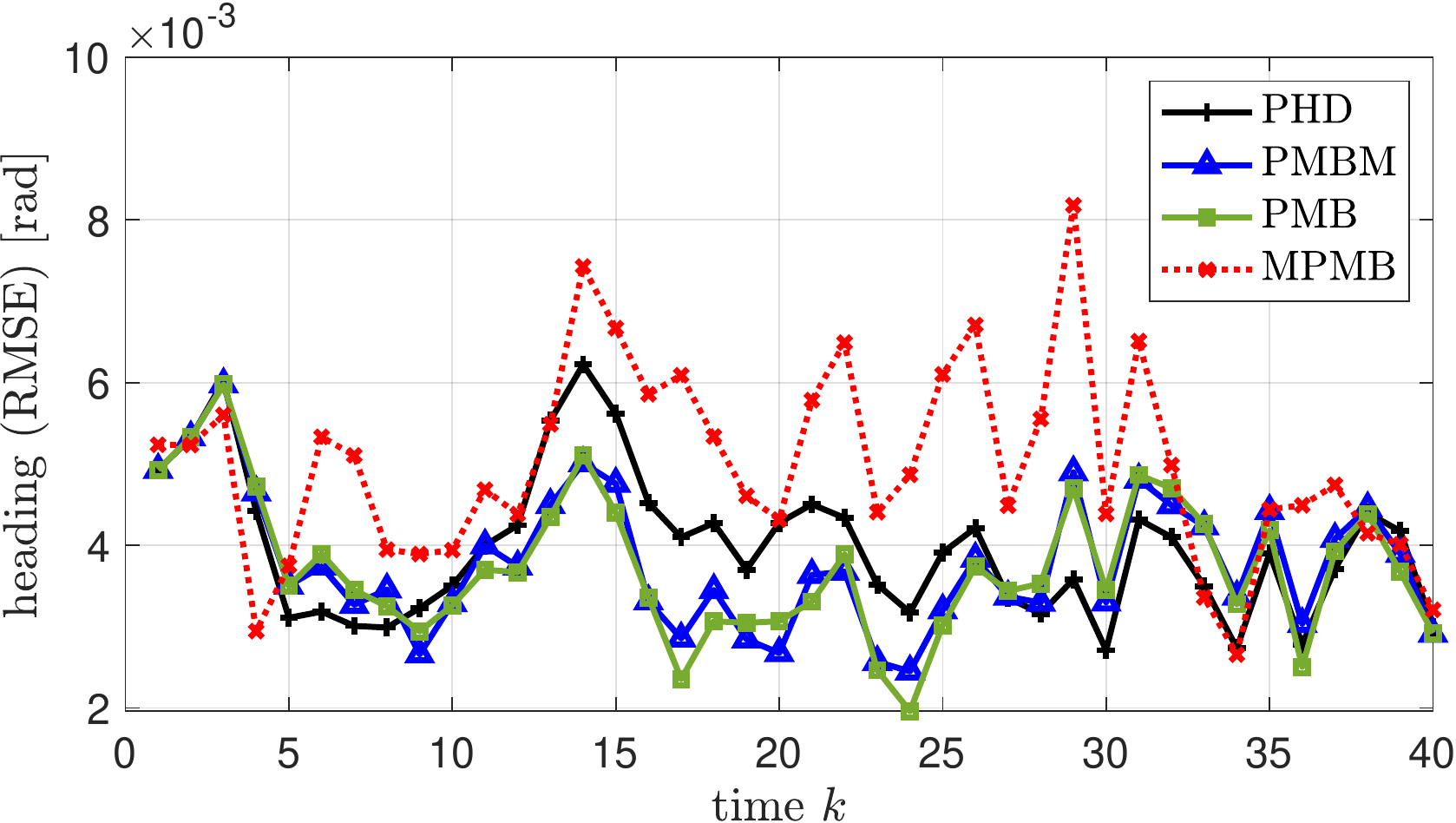}}
	\caption{Comparison of the average RMSE of (a) location, (b) bias, and (c) heading for the three proposed SLAM filters (PMBM, PMB, and marginalized PMB) and PHD-SLAM \cite{Hyowon_TWC2020}.
	}
	\label{Fig:Positioning}
	\par
\end{centering}
\end{figure}

\subsection{Results and Discussions}
    Here, we demonstrate the performance of the proposed three SLAM filters with application to 5G radio-SLAM, compared to the PHD-SLAM filter~\cite{Hyowon_TWC2020} as a benchmark.
    Furthermore, exemplary results of the proposed marginalized PMB-SLAM filter is provided in Fig.~\ref{Fig:SI}, showing the vehicle density and the SP and VA estimates. {The estimated walls are also shown,
    most clearly seen in Fig.~\ref{Fig:SI}c. The walls are directly estimated based on the estimated VAs, so that VA estimation performance will be used to indicate how well the walls can be estimated~\cite[eq.~(42)]{Hyowon_TWC2020}.}
\subsubsection{Mapping}
    Fig.~\ref{Fig:Mapping} shows the mapping performance of the three proposed SLAM filters, compared to the PHD-SLAM filter~\cite{Hyowon_TWC2020}.
    In Fig.~\ref{Fig:VA_GOSPA} and Fig.~\ref{Fig:SP_GOSPA}, average GOSPAs for VA and SP are presented, respectively.
    We can see both VA and SP GOSPAs of the proposed three SLAM filters decrease over time steps while the GOSPAs of PHD fluctuate.
    Hence, the proposed SLAM filters are robust to both missed detections and false alarms, compared to~\cite{Hyowon_TWC2020}.
    Contrary to VAs, SPs have the large measurement covariance and limited FoV, resulting in generating multiple global hypotheses at each time step.
    {Among the three proposed filters, therefore, the SP GOSPA has significant gain in the PMBM filter that never goes through approximation to the global hypotheses~\cite[Sec.~III]{Jason_VAMB_TSP2014}, while the three filters exhibit similar VA GOSPA.}

\subsubsection{Vehicle Localization}
    Fig.~\ref{Fig:Positioning} shows the root mean square error~(RMSE) of the estimated vehicle location, bias, and heading.
    The PMBM, PMB, and PHD SLAM filters, implemented by the RBPF, have similar localization accuracy compared to the MPMB-SLAM filter.
    This is because that the noise covariance for the known BS is relatively smaller than VA and SP in the FIM-based noise covariance even there are differences in the VA and SP GOSPAs.
    {The localization performance of the MPMB filter is  worse than the other filters due to the Gaussian approximation to the posterior density with the nonlinear measurement and assumption that the prior vehicle and landmarks are independent in the joint CKF implementation. We believe that alternative linearizations can close this gap \cite{Angel_IPLF_TSP2015}.
    It results that the prior covariance of joint state between the vehicle and landmarks cannot capture the correlation that is captured in the update step with the measurement likelihood.}
    
    

\subsubsection{Computation Complexity}
    The computational complexity is represented as the operation time.
    During 40 time steps, the vehicle is traveling one lap along with the circular road.
    {The average computation time of the three proposed SLAM filters for per time steps is the following: (i) PMBM-SLAM: 4.7 minutes; (ii) PMB: 3.5 minutes; and (iii) marginalized PMB: 1.4 seconds.
    The operation time of PHD~\cite{Hyowon_TWC2020} is 1.8 minutes.}
    It is shown that PMB-SLAM yielded a reduction of the averaged operation time by 24.7~\% due to approximating the PMBM-SLAM as a PMB-SLAM by the marginal association probabilities and Bethe free energy.
    Furthermore, we confirm that the marginalized PMB-SLAM significantly yielded it by 99.5~\% due to marginalization using Lemma~\ref{lem:MinKLD} as well as PMB-SLAM approximation.
    Even the operation time of proposed PMBM and PMB-SLAM filters are about two times or more than PHD, it is adequate for using PMBM or PMB-SLAM filters, rather than PHD because both are robust to missed detections and false alarms.

\section{Conclusions}\label{sec:Conclusions}
    This paper tackled the SLAM problem using the PMBM filter. We have proposed, derived, and implemented three novel SLAM filters, based on the PMBM filter: the PMBM, PMB, and marginalized PMB-SLAM filters. 
    We showed that the three proposed filters are robust to both missed detections and false alarms in comparison with the PHD-SLAM filter, and confirmed that the computation-performance trade-off comes from the marginalization of auxiliary variables.
    The trade-off is especially prominent in the marginalized PMB-SLAM filter by marginalizing both auxiliary variables for global hypotheses and vehicle state. 
    {We also reveal close connections to the BP-SLAM filter, which opens the way to further complexity reductions and highly parallelized particle implementations to improve the accuracy of the marginalized PMB-SLAM filter. A more explicit connection requires additional study and is left for future work.} Other planned extensions of this work are to incorporate a theoretical performance analysis of the proposed filters, combination with a real channel estimator, and the generation of propagation paths from a ray tracing simulator; and to improve the localization accuracy of the marginalized PMB by the posterior linearization approximate with the iterated posterior linearization filter~\cite{Angel_IPLF_TSP2015} or by the parallelized particle filter~\cite[Chapter~5.3]{Henk_IRD_2007} without Gaussian approximation and measurement linearization.
    
    

\appendices

\begin{appendices}

\section{PMB Component and Global Hypothesis Update}
\label{sec:App-PMBM-update}
     
\subsection{PMB Component Update}\label{sec:App-PMBM-map-update}
    \noindent \paragraph*{Step i)} The intensity for $m\in \{\text{VA},\text{SP}\}$ is updated as $\lambda_{\mathsf{u},k}^n(\mathbf{x},m) = (1-\mathsf{p}_{\text{D},k}^n(\mathbf{x},m))\lambda_{\mathsf{u},k-1}^n(\mathbf{x},m),$
    where $\mathsf{p}_{\text{D},k}^n(\mathbf{x},m)$ is shorthand of $\mathsf{p}_\text{D}(\mathbf{s}_{k}^n,\mathbf{x},m)$, from Section \ref{sec:Models}.
    \paragraph*{Step ii)} The updated MBM components with  $\mathbf{z}_k^j\in\mathcal{Z}_k$ have density
\begin{align}
    f_{\mathsf{u},k}^{j,n}(\mathbf{x},m) & = \dfrac{\mathsf{p}_{\text{D},k}^n(\mathbf{x},m)\lambda_{\mathsf{u},k-1}^n(\mathbf{x},m)g(\mathbf{z}_k^j|\mathbf{s}_{0:k}^n,\mathbf{x},m)}{\sum_{m'} e_k^{j}(m')}
\end{align}
    for $m\in\{\text{BS},\text{VA},\text{SP}\}$, existence probability  $r_{\mathsf{u},k}^{j,n} = {\sum_m e_k^{j,n}(m)}/{\nu_k^{n}(\{\mathbf{z}_k^j\})}$, and weight
     $\beta_{\mathsf{u},k}^{j,n} = \nu_k^{n}(\{\mathbf{z}_k^j\})$,
    where 
    $ e_k^{j,n}(m)
    = \int \mathsf{p}_{\text{D},k}^n(\mathbf{x},m)\lambda_{\mathsf{u},k-1}^n(\mathbf{x},m)g(\mathbf{z}_k^j|\mathbf{s}_{0:k}^n,\mathbf{x},m)\mathrm{d}\mathbf{x}$
and $\nu_k^{n}(\{\mathbf{z}_k^j\}) = \sum_{m} e_k^{j,n}(m)+c(\mathbf{z}_k^j)$, 
    in which $c(\mathbf{z})$ is the clutter intensity. This Step ii) leads to new landmarks, which will be indexed $i
    \in \{I_{k-1}+1,\ldots, I_{k-1}+J_k\}$. 
    \paragraph*{Step iii)} Under missed detection, the MBM components are updated with density 
\begin{align}
    f_{\mathsf{u},k}^{0,i,a_{k-1}^i,n}(\mathbf{x},m) & = \dfrac{(1-\mathsf{p}_{\text{D},k}^n(\mathbf{x},m))f_{\mathsf{u},k-1}^{i,a_{k-1}^i,n}(\mathbf{x},m)}
    {\sum_{m'}e_{k}^{0,i,a_{k-1}^i,n}(m')},
\end{align}
    for landmark type $m$$\in$$\{\text{BS},\text{VA},\text{SP}\}$, existence probability $ r_{\mathsf{u},k}^{0,i,a_{k-1}^i,n}$$=$${r_{\mathsf{u},k-1}^{i,a_{k-1}^i,n}\sum_m e_{k}^{0,i,a_{k-1}^i,n}(m)}/{\nu_k^{i,a_{k-1}^i,n}(\emptyset)}$ and weight $ \beta_{\mathsf{u},k}^{0,i,a_{k-1}^i,n}$$=$$\beta_{\mathsf{u},k-1}^{i,a_{k-1}^i,n}\nu_k^{i,a_{k-1}^i,n}(\emptyset)$, 
    where $ e_{k}^{0,i,a_{k-1}^i,n}(m) $$=$$ \int(1-\mathsf{p}_{\text{D},k}^n(\mathbf{x},m))f_{\mathsf{u},k-1}^{i,a_{k-1}^i,n}(\mathbf{x},m) \mathrm{d} \mathbf{x}$ and $\nu_k^{i,a_{k-1}^i,n}(\emptyset) = 1-r_{\mathsf{u},k-1}^{i,a_{k-1}^i,n} + r_{\mathsf{u},k-1}^{i,a_{k-1}^i,n}\sum_m e_{k}^{0,i,a_{k-1}^i,n}(m)$.
\paragraph*{Step iv)}  The updated MBM components with  $\mathbf{z}_k^j\in\mathcal{Z}_k$ have density 
\begin{align}
    f_{\mathsf{u},k}^{j,i,a_{k-1}^i,n}(\mathbf{x},m)   
    &= \dfrac{\mathsf{p}_{\text{D},k}^n(\mathbf{x},m)f_{\mathsf{u},k-1}^{i,a_{k-1}^i,n}(\mathbf{x},m)g(\mathbf{z}_k^j|\mathbf{s}_{0:k}^n,\mathbf{x},m)}{\sum_{m'} e_k^{j,i,a_{k-1}^i,n}(m')},\nonumber
\end{align}
    for landmark type  $m$$\in$$\{\text{BS},\text{VA},\text{SP}\}$, existence probability $r_{\mathsf{u},k}^{j,i,a_{k-1}^i,n}$$=$$1$, and weight $\beta_{\mathsf{u},k}^{j,i,a_{k-1}^i,n}$$=$$\beta_{\mathsf{u},k-1}^{i,a_{k-1}^i,n}\nu_k^{i,a_{k-1}^i,n}(\{\mathbf{z}_k^j\})$,
    where $e_k^{j,i,a_{k-1}^i,n}(m) = \int \mathsf{p}_{\text{D},k}^n(\mathbf{x},m)f_{\mathsf{u},k-1}^{i,a_{k-1}^i,n}(\mathbf{x},m)g(\mathbf{z}_k^j|\mathbf{s}_{0:k}^n,\mathbf{x},m)\mathrm{d}\mathbf{x}$ and $ \nu_k^{i,a_{k-1}^i,n}(\{\mathbf{z}_k^j\})
    = r_{\mathsf{u},k-1}^{j,i,a_{k-1}^i,n}\sum_{m} e_k^{j,i,a_{k-1}^i,n}(m)$.

   \subsection{Global Hypothesis Update}\label{Sec:GHupdate}
    Using the updated MBM components above, the set of global hypotheses $\mathcal{A}_k^n$ is now updated \cite{williams2015marginal} by selecting the $B_k^{\mathbf{a}_{k-1},n}$ best global hypotheses using Murty's algorithm \cite{murthy1968algorithm,Garcia-Fernandez2018} for each $\mathbf{a}_{k-1}\in \mathcal{A}_{k-1}^n$.
    For each landmark $i$ under hypothesis $\mathbf{a}_{k-1}\in \mathcal{A}_{k-1}^n$, $J_k +1$
    local hypotheses are added (1 local hypothesis from Step iii) and $J_k$ local hypotheses from Step iv)); and for each measurement, a new local hypothesis (landmark or clutter) is created (Step ii)).
    {Finally, $\mathcal{A}_{k}^n$ comprises the hypotheses that are globally consistent
    (i.e., with at most 1 measurement for each landmark and at most 1 landmark associated to each measurement).}

\section{Proof of Particle Weight Updates}
     
\subsection{PMBM-SLAM} \label{app:PMBM-weigth-update}
    Finally, the particle weight $w_{\mathsf{u},k}^n$ is updated as 
\begin{align} \label{eq:WeightUpdate}
    &w_{\mathsf{u},k}^n \propto w_{\mathsf{p},k}^n g(\mathcal{Z}_k|\mathbf{s}_{0:k}^n,\mathcal{Z}_{1:k-1})\\
    & = w_{\mathsf{p},k}^n \int f(\mathcal{X}|\mathbf{s}_{0:k}^n,\mathcal{Z}_{1:k-1})g(\mathcal{Z}_k|\mathbf{s}_{0:k}^n,\mathcal{X},\mathcal{Z}_{1:k-1}) \delta\mathcal{X},
\end{align}
    which follows the RFS-likelihood in~\eqref{eq:BasicVehUp}.
    We plug the PMBM form of \eqref{eq:PMBdensity} into the predicted landmark density $f(\mathcal{X}|\mathbf{s}_{0:k}^n,\mathcal{Z}_{1:k-1})$, and adopt the likelihood representation in \cite[eqs.\,(25), (26)]{Garcia-Fernandez2018} for replacing $g(\mathcal{Z}_k|\mathbf{s}_{0:k}^n,\mathcal{X},\mathcal{Z}_{1:k-1})$. Therefore, we find that \cite[Sec. III-D]{Garcia-Fernandez2018}
\begin{align} 
    w_{\mathsf{u},k}^n
    \propto
    &\,w_{\mathsf{p},k}^n\int f(\mathcal{X}|\mathbf{s}_{0:k}^n,\mathcal{Z}_{1:k-1})g(\mathcal{Z}_k|\mathbf{s}_{0:k}^n,\mathcal{X},\mathcal{Z}_{1:k-1}) \delta\mathcal{X} \nonumber
\end{align}
    so that 
\begin{align}
   w_{\mathsf{u},k}^n \propto &\,w_{\mathsf{p},k}^n \sum_{\mathbf{a}_{k-1}\in \mathcal{A}_{k-1}}\sum_{ \substack{ \uplus_{i=1}^{I_{k-1}} \mathcal{Z}_k^i \uplus \mathcal{Z}_k^{\text{U}}= \mathcal{Z}_k,\\ |\mathcal{Z}_k^i|\le 1}}  \prod_{\mathbf{z}_k^j \in \mathcal{Z}_k^\text{U}}\nu_k^{n}({\{\mathbf{z}_k^j\}})\nonumber\\
    &\times \prod_{i=1}^{I_{k-1}} \nu_k^{i,a_{k-1}^i,n}(\mathcal{Z}_k^i)\beta_{\mathsf{u},k-1}^{i,a_{k-1}^i,n} = w_{\mathsf{p},k}^n \chi^n_k.
    \label{eq:ELComputeApp}
\end{align}

\subsection{PMB-SLAM} \label{app:PMB-weigth-update}

    To derive the weight computation of vehicle particle, we recover the normalization constant $\chi_k^n = Z_k^n$~(see,~\eqref{eq:ELCompute} and~\eqref{eq:PMBGlobalWeight}) by the Bethe free energy~\cite{Yedidia2005BetheFree}, defined as
    $- \ln Z_k^n \approx \mathsf{F}(\mathrm{bel}_k^n)  = \mathsf{U}(\mathrm{bel}_k^n)-\mathsf{H}(\mathrm{bel}_k^n)$, where $\mathsf{F}$ is the Bethe free energy, $\mathsf{U}$ denotes the average energy and $\mathsf{H}$ the entropy, which can be computed from the beliefs and the factor graph structure, according to \cite[eqs.~(37)--(38)]{Yedidia2005BetheFree}. 
    In the special case where the beliefs are nearly degenerate, i.e., $\mathrm{bel}_k^n(c_k^i) \approx \delta({c_k^i-\bar{c}_k^{i}})$ and $\mathrm{bel}_k^n(d_k^j) \approx \delta({d_k^j-\bar{d}_k^{j})}$, $\mathsf{H}(\mathrm{bel}_k^n) \approx 0$ and $ \mathsf{U}(\mathrm{bel}_k^n) \approx   -\sum_{i=1}^{I_{k-1}}\ln p_{\mathsf{a},k}^n(\bar{c}_k^{i}) - \sum_{j=1}^{J_{k}}\ln p_{\mathsf{a},k}^n(\bar{d}_k^{j})$, so that
\begin{align}
    Z_k^n & \approx \exp(-\mathsf{F}(\mathrm{bel}_k^n))
       \approx \prod_{i=1}^{I_{k-1}}\prod_{j=1}^{J_{k}}p_{\mathsf{a},k}^n(\bar{c}_k^{i})p_{\mathsf{a},k}^n(\bar{d}_k^{j}),\label{eq:BTFapp}
\end{align}
    which is straightforward to evaluate. 
\section{Conversion from PMBM to PMB} \label{app:conversion}
     
    {
    The marginal association probabilities are the track-oriented associations with the measurements, and each track $i$ represents one landmark with all possible landmark types.
    Let us denote the marginal association probabilities by $\mathsf{p}_k^{i,n}(j)$ for the previously detected landmarks $i\in \{1,...,I_{k-1}\}$ and for missed detections and detections $j\in \{0,...,J_k \}$; and $\mathsf{p}_k^{I_{k-1}+j,n}(0)$ for the newly detected landmarks or clutter $j\in \{0,...,J_k \}$. Here, $\mathsf{p}_k^{i,n}(j)$ is the marginal probability that previously detected landmark $i$ is associated to measurement $j$, and $\mathsf{p}_k^{I_{k-1}+j,n}(0)$ is the marginal probability that measurement $j$ corresponds to a newly detected landmark (the one with index $I_{k-1}+j$).
}
    We then apply the TOMB/P method from~\cite[Appendix C, Fig. 10]{williams2015marginal}: for existing landmarks $i = 1,...,I_{k-1}$
\begin{align}
    {r}_{\mathsf{u},k}^{i,n} &=\sum_{j=0}^{J_k} \mathsf{p}_k^{i,n}(j)r_{\mathsf{u},k}^{j,i,n}\label{eq:TOMBP_ELr},\\
    {f}_{\mathsf{u},k}^{i,n}(\mathbf{x},m) &= \frac{1}{{r}_{\mathsf{u},k}^{i,n}}\sum_{j=0}^{J_k} \mathsf{p}_k^{i,n}(j) r_{\mathsf{u},k}^{j,i,n}f_{\mathsf{u},k}^{j,i,n}(\mathbf{x},m)\label{eq:TOMBP_ELf},
\end{align}
    and for new landmarks (with $j = 1,...,J_k$)
\begin{align}
    {r}_{\mathsf{u},k}^{I_{k-1}+j,n} &= \mathsf{p}_k^{I_{k-1}+j,n}(0)r_{\mathsf{u},k}^{j,n},
    \label{eq:TOMBP_NLr}\\
    {f}_{\mathsf{u},k}^{I_{k-1}+j,n}(\mathbf{x},m) &= f_{\mathsf{u},k}^{j,n}(\mathbf{x},m).\label{eq:TOMBP_NLf}
\end{align}
     
\section{Derivation of Marginalized PMB Map Density}

\subsection{Mathematical Expressions}
\label{app:MPMB-veh}
    We will denote $h_{\mathsf{p},k}^\text{U}(\mathbf{s}_k,\mathbf{x},m) \triangleq f_{\mathsf{p},k}(\mathbf{s}_k)\lambda_{\mathsf{u},k-1}(\mathbf{x},m)$ and 
    $h_{\mathsf{p},k}^{\text{D},i}(\mathbf{s}_k,\mathbf{x},m) \triangleq f_{\mathsf{p},k}(\mathbf{s}_k)f_{\mathsf{u},k-1}^i(\mathbf{x},m)$.
 \begin{enumerate}[i),left=0pt,noitemsep,topsep=0.5pt]
     \item We update $\lambda_{\mathsf{u},k}(\mathbf{x},m)$ for $m\in\{\text{VA},\text{SP}\}$ as $\lambda_{\mathsf{u},k}(\mathbf{x},m) = (1-\int \mathsf{p}_{\text{D},k}(\mathbf{s}_k,\mathbf{x},m)\mathrm{d}\mathbf{s}_k)\lambda_{\mathsf{u},k-1}(\mathbf{x},m).$
     \item The MB components with $\mathbf{z}_k^j \in \mathcal{Z}_k$ are updated as
 \begin{align} \label{eq:MPMB-S2f}
     &f_{\mathsf{u},k}^{j}(\mathbf{x},m) \nonumber\\
     &= 
     \dfrac{\int  \mathsf{p}_{\text{D},k}(\mathbf{s}_k,\mathbf{x},m)h_{\mathsf{p},k}^\text{U}(\mathbf{s}_k,\mathbf{x},m)g(\mathbf{z}_k^j|\mathbf{s}_k,\mathbf{x},m) \mathrm{d}\mathbf{s}_k}
     {\sum_{m'}e_k^{j}(m')},
 \end{align}
     and $r_{\mathsf{u},k}^{j} = {\sum_m e_k^{j}(m)}/{\nu_k(\{\mathbf{z}_k^j\})}$. Here,
 \begin{align}
     e_k^{j}(m) = &\iint \mathsf{p}_{\text{D},k}(\mathbf{s}_k,\mathbf{x},m) h_{\mathsf{p},k}^\text{U}(\mathbf{s}_k,\mathbf{x},m) \nonumber\\
     & g(\mathbf{z}_k^j|\mathbf{s}_k,\mathbf{x},m) \mathrm{d}\mathbf{s}_k \mathrm{d}\mathbf{x}, \label{eq:MarUpC2_e}
 \end{align}
     and $\nu_k(\{\mathbf{z}_k^j\}) = \sum_{m} e_k^{j}(m)+c(\mathbf{z}_k^j)$.
     \item The MB components are updated as
\begin{align}\label{eq:MPMB-S3f}
      f_{\mathsf{u},k}^{0,i}(\mathbf{x},m) 
      =\dfrac{\int(1-\mathsf{p}_{\text{D},k}(\mathbf{s}_k,\mathbf{x},m))h_{\mathsf{p},k}^{\text{D},i}(\mathbf{s}_k,\mathbf{x},m)\mathrm{d}\mathbf{s}_k}
      {\sum_{m'} e_k^{0,i}(m')},
\end{align}
     and $r_{\mathsf{u},k}^{0,i} = {r_{\mathsf{u},k-1}^{i}\sum_m e_k^{0,i}(m)}/{\nu_k^{i}(\emptyset)}$.
     Here, 
\begin{align}
    &e_k^{0,i}(m) \nonumber\\
    &=  \iint(1-\mathsf{p}_{\text{D},k}(\mathbf{s}_k,\mathbf{x},m)) h_{\mathsf{p},k}^{\text{D},i}(\mathbf{s}_k,\mathbf{x},m) \mathrm{d} \mathbf{s}_k\mathrm{d}\mathbf{x},\label{eq:MarUpC3_e}
\end{align}
     and $\nu_k^{i}(\emptyset) = 1-r_{\mathsf{u},k-1}^{i} + r_{\mathsf{u},k-1}^{i}\sum_m e_k^{0,i}(m)$.
     \item The MB components with $\mathbf{z}_k^j \in \mathcal{Z}_k$ are updated as
\begin{align}
     &f_{\mathsf{u},k}^{j,i}(\mathbf{x},m) \nonumber\\
     &= \dfrac{\int \mathsf{p}_\text{D,k}(\mathbf{s}_k,\mathbf{x},m)
     h_{\mathsf{p},k}^{\text{D},i}(\mathbf{s}_k,\mathbf{x},m)
     g(\mathbf{z}_k^j|\mathbf{s}_k,\mathbf{x},m)
     \mathrm{d}\mathbf{s}_k}{\sum_{m'} e_k^{j,i}(m')},
\end{align}
     and $r_{\mathsf{u},k}^{j,i} = 1$.
     Here, $\nu_k^{i}(\{\mathbf{z}_k^j\})
     = r_{\mathsf{u},k-1}^{j,i}\sum_{m} e_k^{j,i}(m)$ and
 \begin{align}
     e_k^{j,i}(m) =& \iint  \mathsf{p}_\text{D,k}(\mathbf{s}_k,\mathbf{x},m)
     h_{\mathsf{p},k}^{\text{D},i}(\mathbf{s}_k,\mathbf{x},m)\nonumber\\
     &\times g(\mathbf{z}_k^j|\mathbf{s}_k,\mathbf{x},m) \mathrm{d}\mathbf{s}_k\mathrm{d}\mathbf{x}.\label{eq:MarUpC4_e}
 \end{align}
 \end{enumerate}
 
 \vspace{-7mm}
 \subsection{CKF Implementation}
\label{app:MPMB-mapCKF}

As derived in Appendix~\ref{app:MPMB-veh},
    the vehicle state is marginalized out in steps i)--iv). 
    We now implement these steps i)--iv) with CKF. 
    
\subsubsection{Steps i) and iii)}    
    Step i) and iii) correspond to the missed detections.
    Therefore, PPP and MB components can be computed without any intractable integration of the product of the prior density and likelihood function.
    

\subsubsection{Step ii)}
    We first show how to compute~\eqref{eq:MarUpC2_e} and then how to compute the spatial density~\eqref{eq:MPMB-S2f}.
\paragraph{Normalization constant for newly detected landmarks~\eqref{eq:MarUpC2_e}}
    We recap~\eqref{eq:MarUpC2_e} and calculate $e_k^{j}(m)$ as
\begin{align}\label{eq:MarUpC2_eCom}
    &e_k^{j}(m) \notag\\
    & = \iint \mathsf{p}_{\text{D},k}(\mathbf{s}_k,\mathbf{x},m) h_{\mathsf{p},k}^\text{U}(\mathbf{s}_k,\mathbf{x},m)  g(\mathbf{z}_k^j|\mathbf{s}_k,\mathbf{x},m) \mathrm{d}\mathbf{s}_k \mathrm{d}\mathbf{x} \notag \\
    & = 
    P_\text{D} 
    \kappa_{\mathsf{u},k-1}(m)
    \iint \mathcal{N}(\mathbf{s}_k;\mathbf{s}_{\mathsf{p},k},\mathbf{U}_{\mathsf{p},k})
    \mathcal{U}(\mathbf{x})\nonumber\\
    &\times \mathcal{N}(\mathbf{z}_k^j;\mathsf{h}(\mathbf{s}_{k},\mathbf{x},m),{\mathbf{R}_k^j})  \mathrm{d}\mathbf{s}_k \mathrm{d}\mathbf{x}.
\end{align}
    We draw $B$ independent and identically distributed (iid) samples $\mathbf{x}_k^{b,j}(m)\sim q_k^j(\mathbf{x},m)=\mathcal{N}(\mathbf{x};\bar{\mathbf{x}}_k^j(m),\bar{\mathbf{P}}_k^j(m))$ with  weights $w_k^{b,j} \propto {\mathcal{U}(\mathbf{x})}/{q(\mathbf{x}_k^{b,j}(m),m)}$, and $\sum_b w_k^{b,j} = 1$. 
    We select the proposal distribution based on the measurement, as follows. We set
    $\bar{\mathbf{x}}_{k}^j(m)=\sum_{c=1}^{2\mathsf{d}(\mathbf{s}_k)}\mathbf{x}_k^{c,j}(m)/{2\mathsf{d}(\mathbf{s}_k)}$ and $\bar{\mathbf{P}}_{k}^j(m)=\sum_{c=1}^{2\mathsf{d}(\mathbf{s}_k)}\big[  \mathbf{P}_{k}^{c,j}(m)+(\mathbf{x}_k^{c,j}(m)-\bar{\mathbf{x}}_{k}^j(m))(\mathbf{x}_k^{c,j}(m)-\bar{\mathbf{x}}_{k}^j(m))^\top]/{2\mathsf{d}(\mathbf{s}_k)}$,
    where $\mathbf{x}_{k}^{c,j}(m)=\sum_{b=1}^{2\mathsf{d}(\mathbf{z}_k)}\mathbf{x}_k^{b,c,j}(m)/{2\mathsf{d}(\mathbf{z}_k)}$ and $\mathbf{P}_{k}^{c,j}(m)=\sum_{b=1}^{2\mathsf{d}(\mathbf{z}_k)} (\mathbf{x}_k^{b,c,j}(m)-\mathbf{x}_{k}^{c,j}(m))
    (\mathbf{x}_k^{b,c,j}(m)-\mathbf{x}_{k}^{c,j}(m))^\top/{2\mathsf{d}(\mathbf{z}_k)}$.
    Here, $\mathbf{x}_k^{b,c,j}(m)$ is the birth point, computed as
\begin{align}\label{eq:VA_birth}
    &\mathbf{x}_k^{b,c,j}(\textrm{VA}) =
    \begin{bmatrix}
        x_{\mathsf{p},k}^c + r_k^{b,j}\cos(\theta_{\textrm{az},k}^{b,j} + \alpha_{\mathsf{p},k}^c)\\
        y_{\mathsf{p},k}^c + r_k^{b,j}\sin(\theta_{\textrm{az},k}^{b,j} + \alpha_{\mathsf{p},k}^c)\\
        z_{\mathsf{p},k}^c + \tau_k^{b,j}\sin(\theta_{\textrm{el},k}^{b,j})
    \end{bmatrix}
\end{align}
\begin{align}\label{eq:SP_birth}
    &\mathbf{x}_k^{b,c,j}(\textrm{SP}) \\
    &= \mathbf{x}_k^{b,c,j}(\textrm{VA}) + \frac{(\mathbf{f}_k-\mathbf{x}_k^{b,c,j}(\textrm{VA}))^\top\mathbf{u}_k}
    {(\mathbf{x}_{\mathsf{p},k}^c-\mathbf{x}_k^{b,c,j}(\textrm{VA}))^\top \mathbf{u}_k}(\mathbf{x}_{\mathsf{p},k}^c-\mathbf{x}_k^{b,c,j}(\textrm{VA})), \nonumber
\end{align}
    where $r_k^{b,j} = (\tau_k^{b,j}-B_k^{b,j})\cos(\theta_{\textrm{el},k}^{b,j})$, $\mathbf{u}_k = ({\mathbf{x}_{\text{BS}} - \mathbf{x}_k^{b,c,j}(\textrm{VA})})/{\Vert \mathbf{x}_{\text{BS}} - \mathbf{x}_k^{b,c,j}(\textrm{VA}) \Vert}$, and $\mathbf{f}_k = 0.5 \times ({\mathbf{x}_{\text{BS}} + \mathbf{x}_k^{b,c,j}(\textrm{VA})})$.
    Here,  $\mathbf{s}_{\mathsf{p},k}^c$ and $\mathbf{z}_{k}^{b,j}$ are the CPs with the cubature index $c$ and $b$, respectively, which are generated from $\mathcal{N}(\mathbf{s}_k;\mathbf{s}_{\mathsf{p},k},\mathbf{U}_{\mathsf{p},k})$ and $\mathcal{N}(\mathbf{z}_k;\mathbf{z}_k^j,{\mathbf{R}_k^j})$.
    Putting this together, then we have
\begin{align}
    e_k^{j}(m) \approx & \frac{P_\text{D}}{B}
    \kappa_{\mathsf{u},k-1}(m)
    \sum_b w_k^{b,j}(m)\int \mathcal{N}(\mathbf{s}_k;\mathbf{s}_{\mathsf{p},k},\mathbf{U}_{\mathsf{p},k}) \nonumber\\
    &\times
    \mathcal{N}(\mathbf{z}_k^j;\mathsf{h}(\mathbf{s}_{k},\mathbf{x}_k^{b,j}(m),m),{\mathbf{R}_k^j})  \mathrm{d}\mathbf{s}_k,
\end{align}
    where $\mathbf{x}_k^{b,j}(m)= \sum_{c=1}^{2\mathsf{d}(\mathbf{s}_k)}\mathbf{x}_k^{b,c,j}(m)/{2\mathsf{d}(\mathbf{s}_k)}$.
    By CKF approximation, we find
\begin{align}
   e_k^{j}(m) \approx & \frac{P_\text{D}}{B}
    \kappa_{\mathsf{u},k-1}(m)
    \sum_b w_k^{b,j}(m)
    \nonumber\\
    &\times
    \mathcal{N}(\mathbf{z}_k^j;\mathsf{h}(\mathbf{s}_{\mathsf{p},k},\mathbf{x}_k^{b,j}(m),m),\mathbf{S}_{\text{zz},k}^{b,j}(m)).
\end{align}

\paragraph{Landmark density for newly detected landmarks~\eqref{eq:MPMB-S2f}}
    {With the CPs $\mathbf{x}_k^{b,c,j}(m)$ of ~\eqref{eq:VA_birth} and~\eqref{eq:SP_birth}, the landmark density of~\eqref{eq:MPMB-S2f} is computed as $f_{\mathsf{u},k}^j(\mathbf{x},m)=\mathcal{N}(\mathbf{x};\bar{\mathbf{x}}_k^j(m),\bar{\mathbf{P}}_k^j(m))$, where $\bar{\mathbf{x}}_k^j(m)$ and $\bar{\mathbf{P}}_k^j(m)$ were handled in~\eqref{eq:MarUpC2_eCom}.}
    

\subsubsection{Step iv)}
    Similar to Step ii), we first compute the normalization constant~\eqref{eq:MarUpC4_e} and then the spatial density~\eqref{eq:MPMB-S3f}. 


    
\paragraph{Normalization constant for previously detected landmarks \eqref{eq:MarUpC4_e}}
    To implement~\eqref{eq:MarUpC4_e}, we recap the expression
\begin{align}
     & e_k^{j,i}(m) \notag\\
     & = \iint  \mathsf{p}_\text{D,k}(\mathbf{s}_k,\mathbf{x},m)
     h_{\mathsf{p},k}^{\text{D},i}(\mathbf{s}_k,\mathbf{x},m) g(\mathbf{z}_k^j|\mathbf{s}_k,\mathbf{x},m) \mathrm{d}\mathbf{s}_k\mathrm{d}\mathbf{x}.\notag
 \end{align}
    To solve the integral, we construct $\mathcal{N}(\mathbf{j}_k;\mathbf{j}_{\mathsf{p},k}^i(m),\mathbf{J}_{\mathsf{p},k}^i(m))$, where $\mathbf{j}_{\mathsf{p},k}^i(m)=[\mathbf{s}_{\mathsf{p},k}^\top, (\mathbf{x}_{\mathsf{p},k}^{i}(m))^\top]^\top$ and $\mathbf{J}_{\mathsf{p},k}^i(m)=\text{blkdiag}(\mathbf{U}_{\mathsf{p},k}, \mathbf{P}_{\mathsf{p},k}^{i}(m))$.
    We approximate $e_k^{j,i}(m)$ as
\begin{align}
    e_k^{j,i}(m) 
    \approx &
    \iint p_{\mathsf{p},k}^{i}(m)p_\text{AD}(m) \mathcal{N}(\mathbf{j}_k;\mathbf{j}_{\mathsf{p},k}^{i}(m),\mathbf{J}_{\mathsf{p},k}^{i}(m))
    \nonumber\\
    &\times \mathcal{N}(\mathbf{z}_k^j;\mathsf{h}(\mathbf{s}_{k},\mathbf{x},m),{\mathbf{R}_k^j})\mathrm{d}\mathbf{s}_k\mathrm{d}\mathbf{x}.
\end{align}
    Using the CKF~\cite{HaykinCKF2009} approximation
\begin{align}
    \label{eq:iv_CKFapprox}
    &\mathcal{N}(\mathbf{j}_k;\mathbf{j}_{\mathsf{p},k}^{i}(m),\mathbf{J}_{\mathsf{p},k}^{i}(m))\mathcal{N}(\mathbf{z}_k^j;\mathsf{h}(\mathbf{s}_{k},\mathbf{x},m),{\mathbf{R}_k^j})\\
    &\approx \mathcal{N}(\mathbf{j};\mathbf{j}_{\mathsf{u},k}^{j,i}(m),\mathbf{J}_{\mathsf{u},k}^{j,i}(m))\mathcal{N}(\mathbf{z}_k^j;\mathsf{h}(\mathbf{s}_{\mathsf{p},k},\mathbf{x}_{\mathsf{p},k}^i(m)),\mathbf{P}_{\text{zz}}^{j,i}(m)),\nonumber
\end{align}
    $e_k^{j,i}(m) 
    \approx p_{\mathsf{p},k}^{i}(m) p_\text{AD}(m) \mathcal{N}(\mathbf{z}_k^j;\mathsf{h}(\mathbf{s}_{\mathsf{p},k},\mathbf{x}_{\mathsf{p},k}^i,m),\mathbf{P}_{\text{zz}}^{j,i}(m)).$
\paragraph{Landmark density for previously detected landmarks~\eqref{eq:MPMB-S3f}}
    {The landmark density of~\eqref{eq:MPMB-S3f} is computed as $f_{\mathsf{u},k}^{0,i}(\mathbf{x},m)=\mathcal{N}(\mathbf{x};\mathbf{x}_{\mathsf{u},k}^{j,i}(m),\mathbf{P}_{\mathsf{u},k}^{j,i}(m))$.
    We extract $\mathbf{x}_{\mathsf{u},k}^{j,i}(m)$ and $\mathbf{P}_{\mathsf{u},k}^{j,i}(m)$
    from $\mathcal{N}(\mathbf{j};\mathbf{j}_{\mathsf{u},k}^{j,i}(m),\mathbf{J}_{\mathsf{u},k}^{j,i}(m))$ of~\eqref{eq:iv_CKFapprox}, where $\mathbf{j}_{\mathsf{u},k}^{j,i}(m)=[\bar{\mathbf{s}}_{\mathsf{u},k}^\top, (\mathbf{x}_{\mathsf{u},k}^{i}(m))^\top]^\top$ and $\mathbf{J}_{\mathsf{u},k}^{j,i}(m) = [\mathbf{U}_{\mathsf{u},k}, \mathbf{O}_{\mathsf{u},k}; \mathbf{O}_{\mathsf{u},k}^\top, \mathbf{P}_{\mathsf{u},k}^{i}(m)]$.}
\section{Derivation of Marginalized Vehicle Posterior}
\label{app:MPMB-target}

    Based on~\cite[eq.~(33)]{Garcia-Fernandez2018}, we have
\begin{align}
    f_{\mathsf{u},k}(\mathbf{s}_k)  \propto & \int f_{\mathsf{p},k}(\mathbf{s}_k) \sum_{\uplus_{i=1}^{I_{k-1}}\mathcal{X}^i \uplus \mathcal{X}^\text{U} \notag =\mathcal{X}}\sum_{\uplus_{i=1}^{I_{k-1}}\mathcal{Z}_k^i \uplus \mathcal{Z}_k^\text{U} =\mathcal{Z}_k} \\ 
    & \times 
    f^\text{U}_{\mathsf{u},k-1}(\mathcal{X}^\text{U})
    l(\mathcal{Z}_k^\text{U}|\mathbf{s}_k,\mathcal{X}^\text{U}) \notag
    \\
    &\times
    \prod_{i=1}^{I_{k-1}}
    f^i_{\mathsf{u},k-1}(\mathcal{X}^i)
    t(\mathcal{Z}_k^i|\mathbf{s}_k,\mathcal{X}^i) \delta \mathcal{X},
\end{align}
    where $\lvert \mathcal{X}^i \rvert \leq 1$. We denote $\mathcal{Z}_k^\text{U}=\{\mathbf{z}_k^1,...,\mathbf{z}_k^{\lvert \mathcal{Z}_k^\text{U} \rvert}\}$, and we decompose $\mathcal{X}^\text{U}$ into all possible sets $\mathcal{U},\mathcal{Y}^1,...,\mathcal{Y}^{\lvert \mathcal{Z}_k^\text{U} \rvert}$, where $\mathcal{U}$ is a set of undetected landmarks, and set $\mathcal{Y}$ is the origin of the measurement $\mathbf{z} \in \mathcal{Z}_k^\text{U}$. Then, $l(\mathcal{Z}_k^\text{U}|\mathbf{s}_k,\mathcal{X}^\text{U})$ is given by \cite[eq. (13)]{Garcia-Fernandez2018}
\begin{align}\label{eq:likelihood_UD}
    &l(\mathcal{Z}_k^\text{U}|\mathbf{s}_k,\mathcal{X}^\text{U}) \nonumber\\
    &=  e^{-\int c(\mathbf{z})\mathrm{d}\mathbf{z}} \sum_{\uplus_{i=1}^{\lvert \mathcal{Z}_k^\text{U} \rvert} \mathcal{Y}^i \uplus \mathcal{U}  =\mathcal{X}^\text{U}} \prod_{(\mathbf{x},m)\in \mathcal{U}} [1-\mathsf{p}_\text{D}(\mathbf{s}_k,\mathbf{x},m)] \nonumber\\ &~~ \times \prod_{\mathbf{z} \in \mathcal{Z}_k^\text{U}} \tilde{l}(\mathbf{z}|\mathbf{s}_k,\mathcal{Y}^i),
\end{align}
    where  $\tilde{l}(\mathbf{z}|\mathbf{s}_k,\mathcal{Y})$ is given by
\begin{align}\label{eq:likelihood_Poi}
    &\tilde{l}(\mathbf{z}|\mathbf{s}_k,\mathcal{Y}) =
    \begin{cases}
        \mathsf{p}_\text{D}(\mathbf{s}_k,\mathbf{x},m)g(\mathbf{z}|\mathbf{s}_k,\mathbf{x},m) & \mathcal{Y} = \{(\mathbf{x},m)\}, \\
        c(\mathbf{z}) & \mathcal{Y} = \emptyset,\\
        0 & \lvert \mathcal{Y} \rvert > 1,
    \end{cases}
\end{align}
    and $t(\mathcal{Z}_k^i|\mathbf{s}_k,\mathcal{X}^i)$ is given by \cite[eq. (26)]{Garcia-Fernandez2018}
\begin{align}\label{eq:likelihoodBer}
    &t(\mathcal{Z}_k^i|\mathbf{s}_k,\mathcal{X}^i)\nonumber \\
    &=
    \begin{cases}
        \mathsf{p}_\text{D}(\mathbf{s}_k,\mathbf{x},m)g(\mathbf{z}|\mathbf{s}_k,\mathbf{x},m) & \mathcal{Z}_k^i=\{\mathbf{z} \},~\mathcal{X}^i=\{(\mathbf{x},m)\},\\
        1-\mathsf{p}_\text{D}(\mathbf{s}_k,\mathbf{x},m) & \mathcal{Z}_k^i = \emptyset,~\mathcal{X}^i=\{(\mathbf{x},m)\},\\
        1 & \mathcal{Z}_k^i = \emptyset,~\mathcal{X}^i=\emptyset,\\
        0 & \text{otherwise}.
    \end{cases}
\end{align}
    Making use of~\cite[Corollary~2]{Garcia-Fernandez2018}\cite[Lemma~2]{Jason_VAMB_TSP2014}, which states that
    $ \int \sum_{\mathcal{X} \uplus \mathcal{Y}=\mathcal{Z}}f(\mathcal{X})g(\mathcal{Y}) \delta \mathcal{Z} = \int f(\mathcal{X}) \delta\mathcal{X} \int g(\mathcal{Y}) \delta\mathcal{Y}$, 
    we find
\begin{align}
    &f_{\mathsf{u},k}(\mathbf{s}_k) \label{eq:MarVehicleUp}\\
    &\propto  \sum_{\uplus_{i=1}^{I_{k-1}}\mathcal{Z}_k^i \uplus \mathcal{Z}_k^\text{U} =\mathcal{Z}_k} f_{\mathsf{p},k}(\mathbf{s}_k)q(\mathcal{Z}_k^\text{U}|\mathbf{s}_k)\prod_{i=1}^{I_{k-1}} q(\mathcal{Z}_k^i|\mathbf{s}_k), \nonumber
\end{align}
    where $q(\mathcal{Z}_k^\text{U}|\mathbf{s}_k)$ and $q(\mathcal{Z}_k^\text{U}|\mathbf{s}_k)$ are
\begin{align}
    q(\mathcal{Z}_k^\text{U}|\mathbf{s}_k) 
    &= \int f^\text{U}_{\mathsf{u},k-1}(\mathcal{X}^\text{U})l(\mathcal{Z}_k^\text{U}|\mathbf{s}_k,\mathcal{X}^\text{U}) \delta\mathcal{X}^\text{U},\label{eq:likelihood_zs2}\\
    q(\mathcal{Z}_k^i|\mathbf{s}_k) &= \int f^i_{\mathsf{u},k-1}(\mathcal{X}^i)t(\mathcal{Z}_k^i|\mathbf{s}_k,\mathcal{X}^i)\delta\mathcal{X}^i.\label{eq:likelihood_zsBer}
\end{align}
    Substituting~\eqref{eq:likelihood_UD} into~\eqref{eq:likelihood_zs2} and invoking again~\cite[Corollary~2]{Garcia-Fernandez2018}, 
    we find that $q (\mathcal{Z}_k^\text{U}|\mathbf{s}_k)$ is a PPP since 
\begin{align}
     q (\mathcal{Z}_k^\text{U}|\mathbf{s}_k)
    =&  e^{-\int c(\mathbf{z})\mathrm{d}\mathbf{z}} e^{-\sum_m\int \lambda_{\mathsf{u},k-1}(\mathbf{x},m) \mathrm{d}\mathbf{x} } \notag \\
    & \times\int \prod_{(\mathbf{x},m)\in \mathcal{U}}\lambda_{\mathsf{u},k-1}(\mathbf{x},m)(1-\mathsf{p}_\text{D}(\mathbf{s}_k,\mathbf{x},m)) \delta \mathcal{U} \nonumber\\
    & \times \prod_{\mathbf{z} \in \mathcal{Z}_k^\text{U}}
    \int \prod_{(\mathbf{x},m)\in \mathcal{Y}}
    \lambda_{\mathsf{u},k-1}
    (\mathbf{x},m)
    l(\mathbf{z}|\mathbf{s}_k,\mathcal{Y}) \delta\mathcal{Y} \label{eq:likelihood_zs1}\\
    \propto &  e^{-\int \psi_k(\mathbf{z},\mathbf{s}_k)\mathrm{d}\mathbf{z}} \prod_{\mathbf{z} \in \mathcal{Z}_k^\text{U}} \psi_k(\mathbf{z},\mathbf{s}_k)
\end{align}
with intensity function
\begin{align}
    & \psi_k(\mathbf{z},\mathbf{s}_k) \label{eq:appIntensity}
    \\&=c(\mathbf{z}) +\sum_m\int\mathsf{p}_\text{D}(\mathbf{s}_k,\mathbf{x},m)\lambda_{\mathsf{u},k-1}(\mathbf{x},m)   g(\mathbf{z}|\mathbf{s}_k,\mathbf{x},m)\mathrm{d}\mathbf{x}. \notag
\end{align}
    Substituting~\eqref{eq:likelihoodBer} into~\eqref{eq:likelihood_zsBer} with the data association, we find that if $\mathcal{Z}_k^i=\{\mathbf{z} \}$,
\begin{align}
    q_i(\mathcal{Z}_k^i|\mathbf{s}_k) = & r_{\mathsf{u},k-1}^i \sum_m \int\mathsf{p}_\text{D}(\mathbf{s}_k,\mathbf{x},m) f_{\mathsf{u},k-1}^i(\mathbf{x},m) \nonumber\\
    &\times g(\mathbf{z}|\mathbf{s}_k,\mathbf{x},m) \mathrm{d}\mathbf{x}, \label{eq:app-qZi-detected}
\end{align}
    and that if $\mathcal{Z}_k^i=\emptyset$,
\begin{align}
    q_i(\mathcal{Z}_k^i|\mathbf{s}_k)=&r_{\mathsf{u},k-1}^i \sum_m \int(1-\mathsf{p}_\text{D}(\mathbf{s}_k,\mathbf{x},m))f_{\mathsf{u},k-1}^i(\mathbf{x},m)\mathrm{d}\mathbf{x}\nonumber\\
    &+ 1-r_{\mathsf{u},k-1}^i. \label{eq:app-qZi-misdetected}
\end{align}
    Using the sense of KLD minimization of Lemma~\ref{lem:MinKLD},
    we approximate the normalization constant for each global hypothesis in~\eqref{eq:MarVehicleUp} as
\begin{align}
    &\int f_{\mathsf{p},k}(\mathbf{s}_k)  q(\mathcal{Z}_k^\text{U}|\mathbf{s}_k)\prod_{i=1}^{I_{k-1}} q_i(\mathcal{Z}_k^i|\mathbf{s}_k)\mathrm{d}\mathbf{s}_k \label{eq:normalconstveh} \\
    &\approx\int f_{\mathsf{p},k}(\mathbf{s}_k)  q(\mathcal{Z}_k^\text{U}|\mathbf{s}_k)\mathrm{d}\mathbf{s}_k \prod_{i=1}^{I_{k-1}} \int f_{\mathsf{p},k}(\mathbf{s}_k) q_i(\mathcal{Z}_k^i|\mathbf{s}_k)\mathrm{d}\mathbf{s}_k. \notag
\end{align}
Using again the sense of KLD minimization similarly to~\eqref{eq:normalconstveh}, we compute the first factor of~\eqref{eq:normalconstveh}:
\begin{align}
    &\int f_{\mathsf{p},k}(\mathbf{s}_k)q(\mathcal{Z}_k^\text{U}|\mathbf{s}_k)\mathrm{d}\mathbf{s}_k 
    \appropto \int f_{\mathsf{p},k}(\mathbf{s}_k)e^{-\int \psi_k(\mathbf{z},\mathbf{s}_k)\mathrm{d}\mathbf{z}}\mathrm{d}\mathbf{s}_k \nonumber\\
    &  \times \prod_{\mathbf{z} \in \mathcal{Z}_k^\text{U}} 
    \int f_{\mathsf{p},k}(\mathbf{s}_k)\psi_k(\mathbf{z},\mathbf{s}_k)\mathrm{d}\mathbf{s}_k \\
     & \propto \prod_{\mathbf{z} \in \mathcal{Z}_k^\text{U}} (c(\mathbf{z})+\sum_m e_k(m)) =  \prod_{\mathbf{z} \in \mathcal{Z}_k^\text{U}} \nu_k(\{\mathbf{z}\})
\end{align}
where the constant $\int f_{\mathsf{p},k}(\mathbf{s}_k)e^{-\int \psi_k(\mathbf{z},\mathbf{s}_k)\mathrm{d}\mathbf{z}}\mathrm{d}\mathbf{s}_k$ 
{is} identical for all terms in \eqref{eq:MarVehicleUp} and 
\begin{align}
& e_k(m)=\\
& \iint f_{\mathsf{p},k}(\mathbf{s}_k)
    \mathsf{p}_\text{D}(\mathbf{s}_k,\mathbf{x},m)\lambda_{\mathsf{u},k-1}(\mathbf{x},m)   g(\mathbf{z}|\mathbf{s}_k,\mathbf{x},m)\mathrm{d}\mathbf{x}
    \mathrm{d}\mathbf{s}_k \notag
\end{align}
is identical to \eqref{eq:MarUpC2_e} in Appendix~\ref{app:MPMB-target}.

We compute the second factor of \eqref{eq:normalconstveh} as
\begin{align}\label{eq:intsecterm}
    \int f_{\mathsf{p},k}(\mathbf{s}_k)q_i(\mathcal{Z}_k^i|\mathbf{s}_k)\mathrm{d}\mathbf{s}_k =\nu_k^i(\mathcal{Z}_k^i),
\end{align}
    where $\nu_k(\{\mathbf{z}\})$, $\nu_k^i(\emptyset)$, and $\nu_k^i(\{\mathbf{z}\})$ were determined in Appendix~\ref{app:MPMB-veh}.

    We then find \eqref{eq:MarVehPosApp} by substitution of the normalized densities 
    of \eqref{eq:likelihood_zs2} and \eqref{eq:likelihood_zsBer} into \eqref{eq:MarVehicleUp}:
\begin{align}
    f_{\mathsf{u},k}(\mathbf{s}_k)  = & \sum_{\uplus_{i=1}^{I_{k-1}}\mathcal{Z}_k^i \uplus \mathcal{Z}_k^\text{U} =\mathcal{Z}_k}
    \prod_{\mathbf{z} \in \mathcal{Z}_k^\text{U}} \nu_k(\{\mathbf{z}\}) \prod_{i=1}^{I_{k-1}} \nu_k^i(\mathcal{Z}_k^i)\\
     & \times q(\mathbf{s}_k|\mathcal{Z}_k^\text{U},\mathcal{Z}_k^1,\ldots, \mathcal{Z}_k^{I_{k-1}})
     \nonumber
\end{align}
    where $q(\mathbf{s}_k|\mathcal{Z}_k^\text{U},\mathcal{Z}_k^1,\ldots, \mathcal{Z}_k^{I_{k-1}})$ is a normalized density, proportional to $ f_{\mathsf{p},k}(\mathbf{s}_k)  q(\mathcal{Z}_k^\text{U}|\mathbf{s}_k)\prod_{i=1}^{I_{k-1}} q_i(\mathcal{Z}_k^i|\mathbf{s}_k)$. 
    {To recover \eqref{eq:MarVehPosApp}, we note that the summation $\sum_{\uplus_{i=1}^{I_{k-1}}\mathcal{Z}_k^i \uplus \mathcal{Z}_k^\text{U} =\mathcal{Z}_k}$ is equivalent to the summation over global hypotheses $\mathbf{a}_k$ and that the weights of global hypotheses are given by \eqref{eq:PMBGlobalWeight}:
\begin{align}
    f_{\mathsf{u},k}(\mathbf{s}_k)  & =  \sum_{\mathbf{a}_k} \mathsf{p}(\mathbf{a}_k) q(\mathbf{s}_k|\mathbf{a}_k)\\
    & = \sum_{\mathbf{a}_k} \mathsf{p}(\mathbf{a}_k)  \frac{\Phi(\mathbf{s}_k|\mathbf{a}_k)}
    {\int \Phi(\mathbf{s}'_k|\mathbf{a}_k)\mathrm{d}\mathbf{s}'_k},
\end{align}
    where $\Phi(\mathbf{s}_k|\mathbf{a}_k)$ is given by
\begin{align}
    \Phi(\mathbf{s}_k|\mathbf{a}_k) = f_{\mathsf{p},k}(\mathbf{s}_k)\prod_{j \in \text{U}_k(\mathbf{a}_k)} \psi_k(\mathbf{z}^j,\mathbf{s}_k) \prod_{i=1}^{I_{k-1}} q_i(\mathcal{Z}_k^{a^i_k}|\mathbf{s}_k).
\end{align}
    Finally, we find that
\begin{align}
    & f_{\mathsf{u},k}(\mathbf{s}_k) \\
    & \approx \sum_{\mathbf{a}_k} \mathsf{p}(\mathbf{a}_k) f_{\mathsf{p},k}(\mathbf{s}_k) 
    \prod_{j \in \text{U}_k(\mathbf{a}_k)} \frac{\psi_k(\mathbf{z}^j,\mathbf{s}_k)}{\nu_k(\{\mathbf{z}^{j}\})} \prod_{i=1}^{I_{k-1}}
    \frac{q_i(\mathcal{Z}_k^{a^i_k}|\mathbf{s}_k)}{\nu_k^{i}(\mathcal{Z}_k^{a^{i}_k})}.\notag
\end{align}}

\end{appendices}
\bibliographystyle{IEEEtran}
\bibliography{bibliography}

\end{document}